\documentclass[letterpaper, 10 pt, conference]{ieeeconf}  % Comment this line out if you need a4paper

\IEEEoverridecommandlockouts                              % This command is only needed if 
\usepackage{bm}
\newtheorem{cconj}{Conjecture}

\newtheorem{pproblem}{Problem}
\newenvironment{problem}{\begin{pproblem}\rm }{%\hfill \hspace*{1pt} \hfill $\lrcorner$
\end{pproblem}}
\newtheorem{ddefn}{Definition}

\newtheorem{llemma}{Lemma}
\newenvironment{lemma}{\begin{llemma}\rm }{%\hfill \hspace*{1pt} \hfill $\lrcorner$
\end{llemma}}
\newtheorem{ttheorem}{Theorem}

\newtheorem{aassumption}{Assumption}
\newenvironment{assumption}{\begin{aassumption}\rm }{%\hfill \hspace*{1pt} \hfill $\lrcorner$
\end{aassumption}}
\newtheorem{cclaim}{Claim}

\newtheorem{pproposition}{Proposition}
\newenvironment{proposition}{\begin{pproposition}\rm }{%\hfill \hspace*{1pt} \hfill $\lrcorner$
\end{pproposition}}
\newtheorem{rremark}{Remark}
\newenvironment{remark}{\begin{rremark}\rm }{%\hfill \hspace*{1pt} \hfill $\lrcorner$
\end{rremark}}

\newtheorem{eexample}{Example}

\newtheorem{ccorollary}{Corollary}
\newenvironment{corollary}{\begin{ccorollary}\rm }{\hfill \hspace*{1pt} \hfill %$\lrcorner$
\end{ccorollary}}

\usepackage{graphics} % for pdf, bitmapped graphics files
\usepackage{amsmath} % assumes amsmath package installed
\usepackage{amssymb}  % assumes amsmath package installed
\usepackage{epstopdf}
\usepackage{cite}

\usepackage{todo}

\usepackage{comment}
\usepackage[normalem]{ulem}

\usepackage{overpic}
\usepackage[dvipsnames]{xcolor}
\usepackage[normalem]{ulem}
\newcommand\redsout{\bgroup\markoverwith{\textcolor{red}{\rule[0.5ex]{2pt}{1.0pt}}}\ULon}

\DeclareMathOperator{\argmin}{\operatorname{argmin}}

\newcommand\N{\mathbb{N}}

\newcommand\R{\mathbb{R}}

\newcommand\sign{\text{sign}}

\usepackage{color}

\begin{document}

\title{\LARGE \bf
Semi-Explicit Solutions to the Prying-Pedestrian Surveillance-Evasion Differential Game and Extensions to Two Pursuers
}

\author{{Philipp Braun}$^{1}$, Daniel Coutinho$^{2}$, Timothy L. Molloy$^{3}$ and Iman Shames$^{4}$ 
 \thanks{$^{1}$P. Braun, is with the School of Engineering, Australian National University, Canberra, Australia, {\tt\small philipp.braun@anu.edu.au}.}
 \thanks{$^2$D. Coutinho is with the Department of Automation and Systems Engineering, Universidade Federal de Santa Catarina, Florianópolis,  Brazil
 {\tt\small daniel.coutinho@ufsc.br}.}
 \thanks{$^{3}$T. L. Molloy is with the Department of Electrical and Computer Systems Engineering, Monash University, Clayton, Australia, {\tt\small  timothy.molloy@monash.edu}.}
 \thanks{$^{4}$I. Shames is with the 
Department of Electrical and Electronic Engineering, University of Melbourne, Melbourne, Australia, {\tt\small iman.shames@unimelb.edu.au}.}%
\thanks{This work was supported by the United States Air Force Office of Scientific Research under Grant No. FA2386-24-1-4014.}
 }%

\maketitle

\begin{abstract}
In \cite{braun_prying_2025}, the authors recently proposed and solved a surveillance-evasion differential game in which an agile pursuer (the prying pedestrian) seeks to remain within a given surveillance range of a less agile evader for as long as possible while the evader seeks to escape as quickly as possible.
In this paper, we provide initial results that extend this game from the 1 versus 1 (1v1) setting to a 2 versus 1 (2v1) setting with two pursuers and one evader. 
By deriving and exploiting semi-explicit or geometric reinterpretations of the existing 1v1 results, we derive partial solutions to the 2v1 game for the case of static pursuers and for the case of an evader that is at least twice as fast as the pursuers. 
While the 2v1 results of this paper build on the 1v1 results of \cite{braun_prying_2025}, a different solution approach is developed to avoid a coordinate transformation that reduces the 1v1 game to two dimensions but which is ineffective at simplifying the 2v1 game.
Beyond enabling progress on the 2v1 game, our new approach yields new geometric interpretations of the optimal pursuer and evader strategies in the 1v1 game, and opens further possible extensions.
\end{abstract}

\IEEEpeerreviewmaketitle

\section{Introduction}
\label{sec:intro}

Differential games of pursuit evasion \cite{merz_homicidal_1974,exarchos_asymmetric_2014,exarchos_suicidal_2015,weintraub_introduction_2020,dorothy_one_2024}, surveillance evasion \cite{lewin_surveillance-evasion_1975,lewin_conic_1979,lewin_isotropic_1989,braun_prying_2025}, and collision avoidance \cite{merz_optimal_1973,molloy_optimal_2020} have been extensively studied in 1 versus 1 (1v1) settings with two opposing players (or agents) since the work of Isaacs \cite{isaacs_differential_1965}.
Pursuit-evasion and collision-avoidance games have also received attention in settings with multiple pursuers and multiple evaders \cite{ibragimov_simple_2018,weintraub_introduction_2020,azamov_evasion_2025,garcia_multiple_2021,exarchos_uav_2016}, with a notable concentration of work on 2 versus 1 (2v1) settings with two pursuers and 1 evader beginning in \cite{hagedorn_differential_1976}.
In contrast, there appears to have been limited (if any) detailed consideration of surveillance-evasion differential games with more than one pursuer and evader.
We therefore seek to extend the recent prying-pedestrian surveillance-evasion differential game of \cite{braun_prying_2025} from a 1v1 setting of a single pursuer surveilling a single evader to a 2v1 setting involving two pursuers surveilling an evader.

Lewin and Breakwell \cite{lewin_surveillance-evasion_1975} were the first to pose and solve a \emph{game-of-degree} surveillance-evasion game in which a turn-rate limited pursuer seeks to maximize the time within surveillance range of an evader seeking to escape in minimum time, with the evader being agile in the sense of being capable of instantaneously changing its heading.
Several varieties of 1v1 surveillance-evasion games of degree have since been posed and solved \cite{lewin_conic_1979,lewin_isotropic_1989}, with the most recent being the prying-pedestrian game of \cite{braun_prying_2025} in which an agile pursuer (the prying pedestrian) seeks to maximize the time it surveils a faster turn-rate limited evader seeking to escape in minimum time.

The prying-pedestrian game of \cite{braun_prying_2025} is of practical and theoretical interest for two key reasons.
Firstly, it offers practical insight into the performance attainable by any turn-rate limited pursuer since an agile pursuer can employ the surveillance strategies of less agile pursuers.
Secondly, its solution exhibits surprisingly complex speed-ratio dependence, which does not appear in similar pursuit-evasion or collision-avoidance games with agile pursuers (cf.~\cite{exarchos_suicidal_2015} and \cite{molloy_optimal_2020}).
Indeed, when the evader is twice as fast as the pursuer, the evader's optimal strategies exhibit counter-intuitive phenomena, including turning until aligned with the pursuer and proceeding to ``pass through it''.
This speed-ratio dependence was, however, only observed in \cite{braun_prying_2025} via the numerical solution of differential equations, and an explicit or geometric (rather than numerical) justification for its optimality is currently lacking.
In this paper, we seek an explicit or geometric (and intuitive) reinterpretation of the evader's strategies to progress the solution of the 2v1 prying-pedestrian differential game of two agile pursuers surveilling a single evader. 

The main contributions of this paper are the development of (semi-)explicit solutions to the 1v1 prying pedestrian game of \cite{braun_prying_2025} (including the first geometric characterization of the evader's optimal strategies at different speed ratios); and, the formulation and preliminary investigation of a 2v1 version of the prying-pedestrian differential game with two agile pursuers and one (faster) turn-rate limited evader.
We also pay particular attention to analyzing the case of two static (or stationary) pursuers since it can be equivalently viewed as a novel optimal control problem that extends the minimum-time circle-escape problem of \cite{molloy_minimum-time_2023,weintraub_minimum_2024} to the minimum-time escape from two (potentially overlapping) circles.

This paper is structured as follows.
In Section \ref{sec:setting}, we introduce the 2v1 game prying-pedestrian differential game and revisit the 1v1 game.
In Section \ref{sec:1v1_setting}, we reinterpret, and develop geometric arguments for, optimal strategies in the 1v1 game.
In Sections \ref{sec:2v1} and \ref{sec:non-static-pursuers}, we present preliminary characterizations of solutions to the 2v1 game.
Conclusions and future work are provided in Section \ref{sec:conclusions}.

{\bf Notation:} 
A closed
ball of radius $\rho>0$ centered around $\xi\in \R^2$ is denoted by $\mathcal{B}_\rho(\xi)=\{\zeta \in \R^2| \ |\xi-\zeta|\leq \rho\}$.
For $\theta \in \R$ the rotation matrix is 
$
    R(\theta) = \left[ \begin{smallmatrix}
         \cos(\theta) & -\sin(\theta)  \\
         \sin(\theta) & \cos(\theta) 
    \end{smallmatrix} \right]
$.
The sign function is the set-valued map 
\begin{align}
\sign(x) = \left\{ \begin{array}{cl}
    \{-1\} & \text{if } x<0,  \\
    \{1\} & \text{if } x>0, \\
    \{-1,1\} & \text{if } x=0.
\end{array} \right. \label{eq:sign_f}    
\end{align}
For $[\begin{smallmatrix}
     x \\ y
 \end{smallmatrix}] \in \R^2 \backslash\{0\}^2$, we write
$
 \arctan_2 ([\begin{smallmatrix}
     x \\ y
 \end{smallmatrix}])= \arctan_2 ( y,x).
$

\section{Setting and Problem Formulation} 
\label{sec:setting}

In this paper, we extend the results derived in \cite{braun_prying_2025} towards settings with multiple pursuers. 
In this section, we extend the 1v1 prying-pedestrian surveillance-evasion differential game of degree of \cite{braun_prying_2025} to the 2v1 setting of two pursuers versus one evader. 
Optimal and suboptimal solutions for different parameter settings are derived in the following sections.

To introduce the 2v1 game, we consider an evader and two pursuers moving in the two-dimensional Euclidean plane.
The evader maintains a constant speed $v_e > 0$ and its position $\xi_e = [x_e,y_e]^\top \in \mathbb{R}^2$ and heading angle $\theta_e \in (-\pi,\pi]$ evolve according to the unicycle (or Dubins car) kinematic equations
\begin{align}
   \label{eq:cartesianDynamics}
\dot{\xi}_{e\theta} =   \begin{aligned}
   \left[\begin{array}{c}
        \dot{x}_e  \\
        \dot{y}_e \\
        \dot{\theta}_e
   \end{array} \right] = f_e(\xi_{e\theta},u_e) = \left[\begin{array}{c}
       v_e \sin \theta_e(t)  \\
       v_e \cos \theta_e(t)  \\
       \omega_e u_e(t)
   \end{array}\right].
    \end{aligned}
\end{align}
Here, $\omega_e > 0$ is the evader's finite maximum turn rate in radians per second.
The evader's control input is its normalized turn rate 
$u_e(t) \in [-1,1]$ for all $t\in \R_{\geq 0}$ with $u_e(t)=1$, $u_e(t) =-1$ and $u_e(t)=0$ corresponding to a right-hand turn, a left-hand turn, and going straight, respectively.
We use the notation $\xi_e=[x_e,y_e]^\top$ to capture the position of the evader and $\xi_{e\theta}=[x_e,y_e,\theta]^\top$ to capture the position and the orientation.

The two pursuers $i\in \{1,2\}$ similarly move with constant speeds $v_{p_i} \geq 0$ but their positions $\xi_{p_i} = [x_{p_i},y_{p_i}]^\top \in \mathbb{R}^2$ and headings $\theta_{p_i} \in (-\pi,\pi]$ are described by the (agile) kinematic equations
\begin{align}
   \label{eq:cartesianDynamicsPursuer}
   \begin{aligned}
\dot{\xi}_{p_i} = \left[ \begin{array}{c} 
     \dot{x}_{p_i}  \\
      \dot{y}_{p_i} 
\end{array}\right]  = 
f_{p_i} (\xi_{p_i},u_{p_i}) 
= \left[ \begin{array}{c}
     v_{p_i} \sin \theta_{p_i}(t)  \\
     v_{p_i} \cos \theta_{p_i}(t) 
\end{array} \right]. 
   \end{aligned}
\end{align}
The pursuers are agile in the sense that they are capable of instantaneously changing their heading angles $\theta_{p_i}(t)$ (i.e. they have an infinite maximum turn rate).
Thus, $u_{p_i}(t) = \theta_{p_i}(t) \in (-\pi, \pi]$, $t\in \R_{\geq 0}$, are the pursuers' control inputs.

To pose the 2v1 prying pedestrian surveillance-evasion game, 
we define the range between
the evader and each pursuer $i\in \{1,2\}$ as 
\begin{align}
    r_i
    & = |\xi_{p_i} - \xi_e| = \sqrt{(x_{p_i} - x_e)^2 + (y_{p_i} - y_e)^2}. \label{eq:def_r}
\end{align}
The evader is in a pursuer's surveillance range $\rho>0$ if $r_i\leq \rho$ for $i\in \{1,2\}$.

The objective of the evader is to ``escape'' from both pursuers by increasing 
the range $r_i(t)$ beyond the surveillance range $r_i(t)> \rho$, i.e., to achieve $r_i(T) > \rho$ for all $i\in\{1,2\}$ for some $T\geq 0$ and to achieve this property in minimal time.  
Conversely, the objective of the pursuers is to keep at least one pursuer within surveillance range of the evader (i.e. to keep $r_i(t) \leq \rho$ for all $t \geq 0$ for some $i \in \{1,2\}$). 
For the analysis, we focus on the game of degree (instead of the game of kind) and maximize the time the pursuers can keep the evader under surveillance. The game of kind (whether the evader can escape or the pursuers can keep it under surveillance indefinitely) can be solved using similar ideas as in  \cite{braun_prying_2025} independent of the number of pursuers.
The 2v1 game of degree is summarized as follows.

\begin{problem}[Game of Degree] \label{prob:game_of_degree}
Consider the dynamics \eqref{eq:cartesianDynamics} of the evader and the dynamics \eqref{eq:cartesianDynamicsPursuer} of $i\in\{1,2\}$ pursuers with inputs $u_e\in [-1,1]$ and $\theta_{p_i}\in (-\pi,\pi]$, $i\in \{1,2\}$, and defined through parameters $\omega_e,v_e\in \R_{>0}$, $v_{p_i}\in \R_{\geq 0}$ with $v_{p_i}<v_e$ for all $i\in \{1,2\}$.
Additionally, define $u_p=[u_{p_1},u_{p_2}]^\top$.
The game of degree with surveillance radius $\rho>0$, is the optimization problem
\begin{align}
V(\xi_{e\theta 0},& \{\xi_{{p_i}0}\}_{i=1}^2) = \min_{u_e} \max_{u_p}  \int_0^T 1 \, \mathrm{d}t  \label{eq:problem_constraints} \\
\begin{split}
\text{s. t. } \quad & \dot{\xi}_{e\theta}(t) = f_{e}(\xi_{e\theta}(t), u_e(t)), \quad \xi_{e\theta}(0) = \xi_{e\theta0}, \\
& \dot{\xi}_{p_i}(t) = f_{p_i}(\xi_{p_i}(t),  u_{p_i}(t)), \quad \xi_{p_i}(0) = \xi_{{p_i}0}, \\
& u_e(t) \in [-1,1], \quad t \in [0,T],\\
& u_p(t) \in \mathbb{R}^2, \quad t \in [0,T],\\
& \inf_{\substack{|\xi_e(T)-\xi_{p_i}(T)| > \rho \\ i\in \{1,2\}}} T. \nonumber
\end{split} 
\end{align}
\end{problem}

\begin{remark}
    The condition $v_{p_i}<v_e$, $i\in \{1,2\}$, ensures that Problem \ref{prob:game_of_degree} is well-posed, i.e., the game ends in finite time. It is straightforward to extend the problem formulation to $n\in \N$ pursuers with individual surveillance radii $\rho_i\in \R_{>0}$. Here, and in the rest of the paper, we focus on $n=1$ and $n=2$ and on $\rho_1=\rho_2=\rho$ for simplicity of notation.
\end{remark}

In the following, by leveraging the results of \cite{braun_prying_2025} for the 1v1 setting, under additional assumptions on the speed ratio, we derive initial results for the 2v1 setting. 
However, in the 1v1 setting of \cite{braun_prying_2025}, an evader centric coordinate system is used to reduce the dimensionality of the problem to derive optimal strategies.
This approach is not as effective in the 2v1 setting, and thus here we use a different coordinate system defined through the initial positions of the evader and the pursuers.
We therefore begin by recalling and reinterpreting the 1v1 results derived in \cite{braun_prying_2025} in this different coordinate system.

\section{The 1v1 prying pedestrian pursuit evasion game of degree} 
\label{sec:1v1_setting}

In this section, we reinterpret the results of \cite{braun_prying_2025} to derive a (semi-)explicit solution to the 1v1 version of Problem \ref{prob:game_of_degree}.

\subsection{Intuitive optimal strategies for the 1v1 setting}

\begin{proposition} \label{prop:main_result_journal1v1}
Consider the game of degree in Problem \ref{prob:game_of_degree} reduced to a single pursuer. 
Let $\rho, v_e,v_{p_{1}},\omega_e\in \R_{>0}$, $v_e > v_{p_{1}}$
and let $\xi_{e\theta 0}\in \R^3$ and $\xi_{{p_1} 0}\in \R^2$ denote arbitrary initial conditions with $|\xi_{p_1}-\xi_e|\leq \rho$. 
Then,
(a) the optimal solution $T^*=V(\xi_{e\theta 0},\xi_{{p_1}0})$ is finite;
(b) the optimal strategy of the evader satisfies
\begin{align}
    u_e(t)\in \{-1,0,1\} \qquad \forall t\in [0,T^*]
\end{align}
with the evader changing its input at most once from $|u_e(t)|=1$ to $u_e(t)=0$; and,
(c) the optimal strategy of the pursuer is piecewise constant and the pursuer changes its input at most once. 
If additionally the condition $v_e\geq 2 v_{p_1}$ is satisfied, then the pursuer's strategy is constant.
\end{proposition}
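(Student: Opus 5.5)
The plan is to reduce to the 1v1 solution in \cite{braun_prying_2025} and translate its structure into the fixed, initial-condition-based coordinates used here. Following \cite{braun_prying_2025}, we first pass to evader-centric coordinates, writing the relative position of the pursuer in the evader's body frame. In these coordinates the state is two-dimensional and the terminal set is the circle $r=\rho$. Part (a) follows from a simple evader strategy. If the evader sets $u_e\equiv 0$ after at most a half-turn (time at most $\pi/\omega_e$), then $\dot r \ge v_e - v_{p_1}>0$ whenever the pursuer is behind. Hence escape occurs within $\pi/\omega_e + 2\rho/(v_e-v_{p_1})$, which bounds the value from above; nonnegativity bounds it from below.

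For (b) and (c), we apply Isaacs' method / Pontryagin-type necessary conditions to the Hamiltonian
\[
H = 1 + \lambda_\xi^\top(\dot\xi_{p_1}-\dot\xi_e) + \lambda_\theta \omega_e u_e .
\]
The pursuer's maximization gives $\theta_{p_1}$ aligned with $\lambda_\xi$ (agile, unit-speed heading choice). The evader's minimization gives $u_e \in -\sign(\lambda_\theta)$ on non-singular arcs and $u_e=0$ on singular arcs where $\lambda_\theta\equiv 0$. In the inertial frame the dynamics do not depend on position, so $\lambda_\xi$ is constant along any arc. Therefore the pursuer's optimal heading is constant on each arc and can change only where $\lambda_\xi$ jumps; in this state-constraint-free game that happens only through the terminal transversality condition.

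The transversality condition $\lambda_\xi(T)\parallel \xi_{p_1}(T)-\xi_e(T)$ then shows the pursuer moves straight toward the final evader position. The costate $\lambda_\theta$ satisfies $\dot\lambda_\theta = \lambda_\xi^\top \partial_\theta(-\dot\xi_e)$, which is a sinusoid in $\theta_e$. Because $\theta_e$ is monotone on a bang arc, $\lambda_\theta$ can vanish identically at most once. This yields the structure in (b): one bang arc followed by a straight arc, with no straight-then-turn or turn-turn-turn switches.

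The main obstacle is (c). There is a possible second pursuer arc in the regime $v_e<2v_{p_1}$, corresponding to the ``pass-through'' phenomena noted in \cite{braun_prying_2025}, where the relative geometry in evader-centric coordinates contains a dispersal/universal surface. I would import the synthesis of \cite{braun_prying_2025} directly. There, optimal trajectories in reduced coordinates consist of at most two segments: the evader turning and then going straight, while the pursuer's inertial heading is constant on each segment, so it switches at most once.

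For the claim under $v_e\ge 2v_{p_1}$, I would argue geometrically in inertial coordinates. While the evader turns, its position lies on a circle of radius $v_e/\omega_e$. The final escape point is where the straight tangent ray leaves the disk $\mathcal{B}_{\rho}$ centered at the pursuer's position at time $T$. Using $v_e\ge 2v_{p_1}$, I would check that the straight-line pursuer heading toward this final point is already optimal from $t=0$: the turning arc and the straight segment together give a point the pursuer can reach with a single heading. I expect verifying that no switch improves the pursuer's time when $v_e\ge2v_{p_1}$ to be the hardest calculation. I would attempt it by comparing the gradient of the escape time with respect to the pursuer heading on both evader phases.
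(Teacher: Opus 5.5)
The paper's proof is only the citation \cite[Thms.~1--3, Rem.~10]{braun_prying_2025}, which covers (b), (c) and the $v_e\ge 2v_{p_1}$ refinement. Your fallback of importing that synthesis is therefore the right move. The problem is the self-contained reasoning you place around it, where several steps fail.

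In the costate paragraph you correctly note that $\lambda_\xi$ is constant, but then say it can change only ``through the terminal transversality condition''. Transversality only fixes $\lambda_\xi(T)\parallel \xi_{p_1}(T)-\xi_e(T)$; it cannot create a jump. As written, your argument therefore proves that the pursuer \emph{never} switches, which contradicts the one-switch regime for $v_e<2v_{p_1}$ that (c) must allow. A pursuer heading change can only occur where the optimal path meets a set on which the value function is not differentiable. Locating such sets is exactly what the necessary conditions cannot do, and what the cited synthesis does.

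The same problem affects (b). ``$\lambda_\theta$ vanishes identically at most once'' does not exclude a $q\to-q$ switch or a straight-then-turn ending. With $\lambda_\xi$ constant, $\lambda_\theta(t)-\lambda_\xi^\top R(-\tfrac{\pi}{2})\xi_e(t)$ is constant. Hence $u_e$ switches whenever a turning arc crosses a fixed line parallel to $\lambda_\xi$; by $\lambda_\theta(T)=0$ and transversality, this line is the pursuer's own path. A circle can cross a line twice. Excluding these candidates, and showing that the surviving extremal is a saddle point rather than merely stationary, requires more than you wrote.

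For the $v_e\ge 2v_{p_1}$ claim you only announce a calculation and do not identify why the ratio $2$ appears; the planned gradient comparison in the pursuer heading does not point at it. The mechanism the paper gives (Lemma~\ref{lem:illustration_speed_ratio_assumption}) is a timing comparison: the evader reaches the tangent point $S_q$ no later than the pursuer, so the pursuer never ends up ahead of the evader on its straight segment. In the reduced case of a pursuer dead ahead at distance $y$, one has $\bar t_p=y/v_{p_1}$ and $\bar t_e=2r_e\arctan(y/r_e)/v_e$. Then $z\ge\arctan z$ gives $\bar t_p\ge\bar t_e$ for all $y>0$ exactly when $v_e\ge 2v_{p_1}$.

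Finally, in (a) the inequality $\dot r\ge v_e-v_{p_1}$ ``whenever the pursuer is behind'' is false off the rear axis. At angle $\phi$ from that axis one only has $\dot r\ge v_e\cos\phi-v_{p_1}$, and nothing keeps the agile pursuer behind after your half-turn. The bound is still true and easier to get. With $u_e\equiv 0$ from $t=0$, the triangle inequality gives $|\xi_e(t)-\xi_{p_1}(t)|\ge (v_e-v_{p_1})t-\rho$, hence $T^*\le 2\rho/(v_e-v_{p_1})$.
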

\begin{proof}
    Follows from \cite[Thms. 1-3, Rem. 10]{braun_prying_2025}.
\end{proof}

Instead of an evader-centric coordinate system used in \cite{braun_prying_2025}, we now use a coordinate system defined through the initial conditions of the evader and the pursuer in this section to illustrate Proposition \ref{prop:main_result_journal1v1}. Moreover, by using Proposition \ref{prop:main_result_journal1v1} we characterize a solution of the 1v1 setting under the following additional assumption on the speed ratio of the evader and the pursuer, simplifying the strategy of the pursuer.

\begin{assumption}\label{as:speed_ratio}
    For all $i\in \{1,2\}$, the speed ratio of the evader and pursuer $i$ satisfies $v_e \geq 2 v_{p_i}$.
\end{assumption}

An illustration highlighting the importance of the assumption is given later in the paper in Lemma \ref{lem:illustration_speed_ratio_assumption}.
To reinterpret solutions $u_e^*(\cdot;\xi_{e\theta 0}, \xi_{{p_1}0})$, $u_{p_1}^*(\cdot;\xi_{e\theta 0}, \xi_{{p_1}0})$, $V(\xi_{e\theta 0}, \xi_{{p_1}0})$ to the 1v1 game, we consider a coordinate system defined through the initial conditions such that at time $t=0$,
\begin{align}
    \xi_e^E(0)=0=\xi_{e0}^E \quad  \text{and} \quad \theta_e^E(0)=\tfrac{\pi}{2}
\end{align}
with the superscript $\cdot^E$ denoting variables in this coordinate system.
The setting is visualized in Fig.~\ref{fig:1v1_setting}.
\begin{figure}[htb]
    \centering
    \begin{overpic}[width = 1\columnwidth]{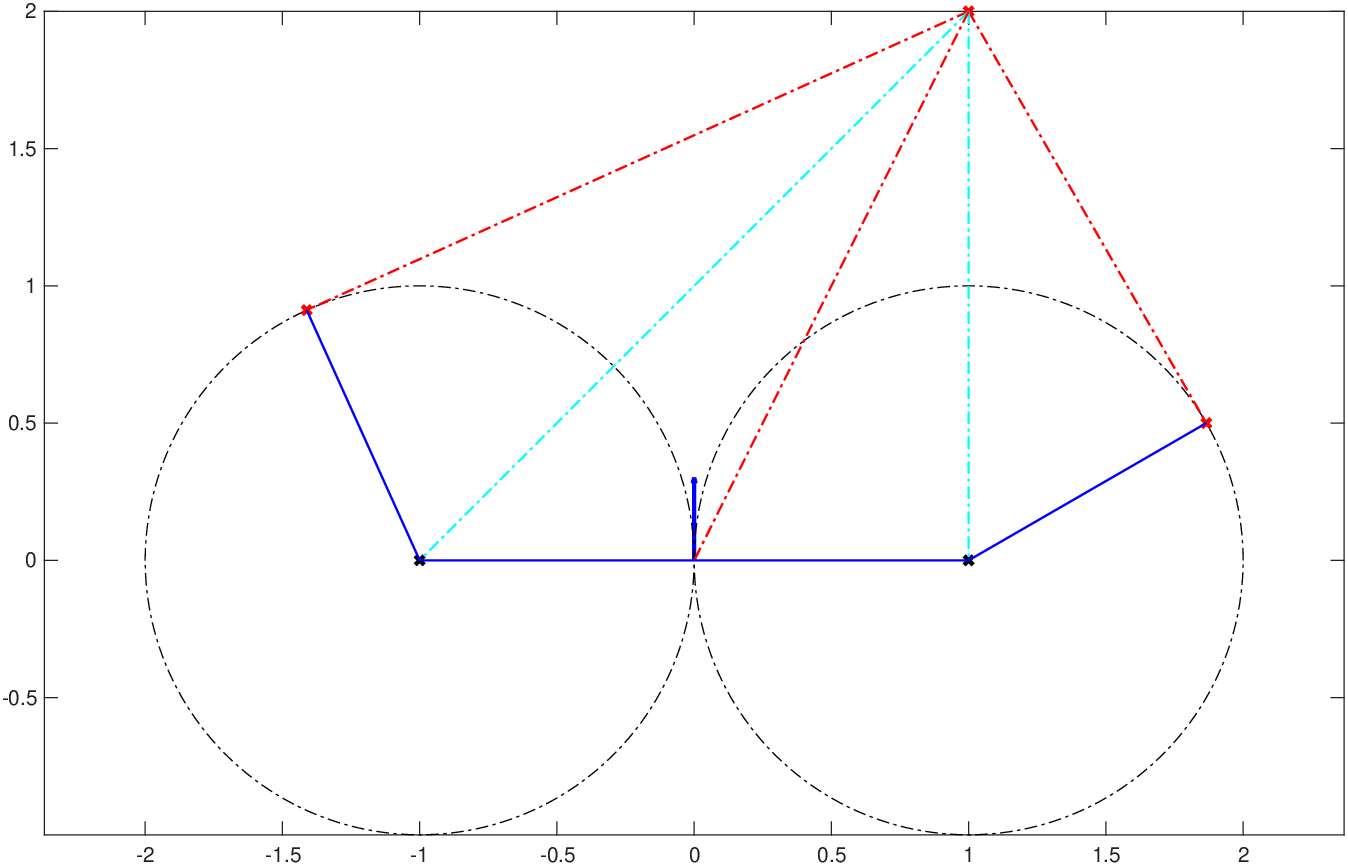}
    \put(53,18){\small{$\xi_{e0}^E$}}
    \put(73,18){\small{$C_1^E$}}
    \put(25,18){\small{$C_{-1}^E$}}
    \put(17,44){\small{$S_{-1}^E$}}
    \put(74,60){\small{$\xi_{p0}^E$}}
    \put(52,51){\small{$\alpha_{-1}$}}
    \put(32,25){\small{$\beta_{-1}$}}
    \put(80,51){\small{$\alpha_{1}$}}
    \put(70,25){\small{$\beta_{1}$}}
    \put(92,30){\small{$S_{1}^E$}}
    \end{overpic}
    \caption{Coordinate system defined through the initial state of the evader. Under the assumption that $u_e^*\in \{-1,0,1\}$ the evader turns around $C_q$, $q\in \{-1,1\}$ or follows a straight line.}
    \label{fig:1v1_setting}
\end{figure}
Assuming that the evader uses a constant input $u_e^*\in \{-1,0,1\}$ (which is a valid assumption according to Proposition \ref{prop:main_result_journal1v1}), the evader stays on one of two circles with radius 
\begin{align}
r_e= \frac{v_e}{\omega_e} \label{eq:turning_radius}
\end{align}
centered at 
\begin{align}
 C_{q}^E= \left[ \begin{array}{c} q r_e \\ 0 \end{array} \right], \qquad q\in\{-1,1\},   \label{eq:C_q}
\end{align}
or the evader stays on the $y$-axis following a straight line. More precisely, $u_e=q$ corresponds to a turn around $C_q^E$ and $u_e=0$ corresponds to a straight line.

The optimal strategy of the evader depends on the bearing angle. If the pursuer is to the right (i.e., $x_{p_10}^E\geq 0$) then the evader turns left (around $C_{-1}^E$) if the pursuer is to the left (i.e., $x_{p_10}^E\leq 0$) then the evader turns right (around $C_{1}^E$). The evader stops turning when the pursuer is aligned with the evader and centered behind the evader (with respect to the orientation $\theta_e^E$). The point where the evader stops turning is highlighted through $S^E_q$ in Fig.~\ref{fig:1v1_setting}. 
The pursuer's optimal strategy is to anticipate this evader maneuver by following a straight line defined through the points $\xi_{p_10}^E$ and $S^E_q$.

\subsection{Geometric derivation of solutions for the 1v1 setting} \label{sec:1v1_geometric_solution}

We now derive the parameters shown in Fig.~\ref{fig:1v1_setting} characterizing optimal solutions of the evader and the pursuer.
As a first step, note that for $q\in \{-1,1\}$, from $r_e=|S_q^E-C_{q}^E|$, if $|C_{q}^E-\xi_{p0}^E|\geq r_e$, it follows that 
\begin{align}
    \alpha_q = \arcsin\left( \frac{r_e}{|C_{q}^E-\xi_{p0}^E|} \right).
\end{align}
Thus, the distance between $S_q^E$ and $\xi_{p0}^E$ is given by
\begin{align*}
    |S_q^E-\xi_{p0}^E| = \sqrt{|C_{q}^E-\xi_{p0}^E|^2 - r_e^2} 
\end{align*}
and the point $S_q^E$ can be obtained from
\begin{align*}
    \frac{S_q^E-\xi_{p0}^E}{|S_q^E-\xi_{p0}^E|}=  R(\alpha_q)^q \frac{C_{q}^E-\xi_{p0}^E}{|C_{q}^E-\xi_{p0}^E|}
\end{align*}
i.e., 
\begin{align}
    S_q^E &= \frac{|S_q^E-\xi_{p0}^E|}{|C_{q}^E-\xi_{p0}^E|} R(\alpha_q)^q (C_{q}^E-\xi_{p0}^E) +\xi_{p0}^E. \label{eq:def_S} 
\end{align}
Here $R(\alpha_1)^1$ corresponds to an anti-clockwise rotation and $R(\alpha_{-1})^{-1}=R(\alpha_{-1})^{\top}$ corresponds to a clockwise rotation, depending on the centers $C_{1}^E$ and $C_{-1}^E$, respectively.
The angle $\beta_q$ characterizing the arc from $\xi_{e0}^E$ to $S_q^E$ is given by
\begin{align}
    \beta_q &= \left\{ \! \!\! \! \begin{array}{cc}
         \arccos \left( \tfrac{r_e^2- (S_q^E)^\top C_{q}^E}{r_e^2}    \right) & \text{if } S_q^E(2) \geq 0,  \\
         2\pi \! - \! \arccos \left( \! \tfrac{r_e^2- (S_q^E)^\top C_{q}^E}{r_e^2}    \right) &  \text{if } S_q^E(2) \leq 0,
    \end{array} \right. \label{eq:beta}
\end{align}
where $S_q^E(2)$ denotes the second component of $S_q^E=[S_q^E(1),S_q^E(2)]^\top$ and where
\begin{align*}
    \left[ \begin{smallmatrix} 0 \\ 1 \end{smallmatrix} \right]^\top \! (S_q^E -C_{q}^E) = \left[ \begin{smallmatrix}
     0 \\ 1 \end{smallmatrix} \right]^\top \! \left(S_q^E -\left[ \begin{smallmatrix} q r_e \\ 0 \end{smallmatrix} \right] \right) =S_q^E(2)
\end{align*}
decides if the angle $\beta_q$ is larger or smaller than $\pi$.

Under the assumption that the evader follows the circle of radius $r_e$ centered at $C_{q}^E$, the evader's position satisfies
\begin{align}
    \xi_{e}^E(t) = -q r_e R(t \tfrac{v_e}{r_e})^{-q} \left[ \begin{smallmatrix}
         1  \\
         0 
    \end{smallmatrix}\right] + C_{q}^E
\end{align}
and the evader reaches the position $S_q^E$  at time 
\begin{align}
    \bar{t}_e = \tfrac{\beta_q}{v_e} r_e. 
    \label{eq:def_t_e}
\end{align}
The pursuer travels along the line segment between $\xi_{p0}^E$ and $S_q^E$ satisfying
\begin{align}
 \xi_{p}^E(t) = \xi_{p0}^E+t v_p \frac{S_q^E-\xi_{p0}^E}{|S_q^E-\xi_{p0}^E|}.
\end{align}
and
reaches the point $S_q^E$ at time $\bar{t}_p=\frac{|S_q^E-\xi_{p0}^E|}{v_p}$.

Before we proceed, we show that under Assumption \ref{as:speed_ratio} and $q\in -\sign(x_{p0}^E)$\footnote{In the case that $x_{p0}^E=0$, the evader can turn left or right, explaining the definition of the sign function in \eqref{eq:sign_f}.} it holds that $\bar{t}_p\geq \bar{t}_e$ and thus the evader reaches the point $S_q^E$ before the pursuer. This is not necessarily the case if $v_e < 2 v_{p}$ and explains \cite[Fig. 9, right]{braun_prying_2025}.

\begin{lemma} \label{lem:illustration_speed_ratio_assumption}
    Consider the dynamics \eqref{eq:cartesianDynamics} and \eqref{eq:cartesianDynamicsPursuer}, let $\xi_{e0}^E=[0,0,\tfrac{\pi}{2}]^\top$ and $\xi_{p0}^E\in \R^2$ and consider the definitions in \eqref{eq:C_q}-\eqref{eq:beta} with $q\in-\sign(x_{p0}^E)$. Then, under Assumption \ref{as:speed_ratio} it holds that $\bar{t}_p=\frac{|S_q^E-\xi_{p0}^E|}{v_p}\geq \bar{t}_e = \frac{\beta_{q}}{v_e}r_e$.
\end{lemma}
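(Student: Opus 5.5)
The plan is to reduce the claim to an elementary trigonometric inequality by parametrizing the initial pursuer position through the tangent point $S_q^E$ instead of through its coordinates. By the mirror symmetry $x\mapsto -x$ of the setting (which swaps $C_1^E$ and $C_{-1}^E$ and the two rotation directions in \eqref{eq:def_S}), it suffices to treat $q=-1$, i.e., $x_{p0}^E\geq 0$ and the evader turning counterclockwise around $C_{-1}^E=[-r_e,0]^\top$. Note first that $x_{p0}^E\ge 0$ gives $|C_{-1}^E-\xi_{p0}^E|\ge r_e$, so $\alpha_{-1}$, $S_{-1}^E$ and $\beta_{-1}$ are well defined. Write $L=|S_q^E-\xi_{p0}^E|$ and $\beta=\beta_q$. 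Since $v_e\ge 2v_p$ by Assumption \ref{as:speed_ratio}, we have $\bar t_e=\beta r_e/v_e\le \beta r_e/(2v_p)$. It therefore suffices to show the purely geometric inequality
\begin{align*}
2L\geq r_e\beta .
\end{align*}

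Next, I would express everything relative to the center $C_{-1}^E$. The evader starts at angular position $0$, namely the point $C_{-1}^E+r_e[1,0]^\top$, and moves counterclockwise. Hence $S_q^E=C_{-1}^E+r_e[\cos\beta,\sin\beta]^\top$, and the unit direction of motion there is $\tau=[-\sin\beta,\cos\beta]^\top$. By construction \eqref{eq:def_S}, $S_q^E-\xi_{p0}^E$ is orthogonal to $S_q^E-C_q^E$, since the angle at $S_q^E$ in the triangle is a right angle by the choice of $\alpha_q$. Moreover, the clockwise rotation $R(\alpha_{-1})^\top$ selects the tangent point at which the evader's velocity points away from the pursuer. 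Therefore
\begin{align*}
\xi_{p0}^E=S_q^E-L\tau .
\end{align*}
Verifying that the rotation convention in \eqref{eq:def_S} picks this tangent, and not the other one, and that it is consistent with the arc angle \eqref{eq:beta}, is the step I expect to need the most care. I would check it on a single configuration, e.g.\ a pursuer directly behind the evader, for which $\beta=0$, and then argue by continuity in $\xi_{p0}^E$.

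With this representation, the first coordinate of the pursuer is $x_{p0}^E=-r_e+r_e\cos\beta+L\sin\beta$. The hypothesis $x_{p0}^E\ge 0$ then becomes
\begin{align*}
L\sin\beta\geq r_e(1-\cos\beta).
\end{align*}
For $\beta\in[\pi,2\pi)$ the left-hand side is nonpositive while the right-hand side is strictly positive, so this case cannot occur; hence $\beta\in[0,\pi)$. If $\beta=0$ the claim is trivial. For $\beta\in(0,\pi)$, dividing by $\sin\beta>0$ gives
\begin{align*}
L\geq r_e\frac{1-\cos\beta}{\sin\beta}=r_e\tan(\beta/2)\geq r_e\,\beta/2,
\end{align*}
where the last step uses $\tan x\ge x$ on $[0,\pi/2)$. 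This yields $2L\ge r_e\beta$, and hence $\bar t_p=L/v_p\ge 2L/v_e\ge r_e\beta/v_e=\bar t_e$. The boundary case $x_{p0}^E=0$, where $q=\pm1$ are both admissible, is covered by applying the same argument for each choice of $q$.
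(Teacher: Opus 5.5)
Your proof is correct. Its core inequality is the same as the paper's: the tangent length against half the arc angle, $\tan(\beta/2)\ge\beta/2$ (the paper writes it as $z\ge\arctan z$), combined with $v_e\ge 2v_p$.

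The difference is how you handle a general pursuer position.
\begin{itemize}
\item The paper assumes without loss of generality that $\xi_{p0}^E$ lies on the positive $y$-axis. There $L=y_{p0}^E$ and $\beta_{-1}=2\arctan(y_{p0}^E/r_e)$ can be read off directly.
\item It justifies this by the informal remark that an off-axis pursuer ``first needs to reach the $y$-axis''. This tacitly uses that sliding the pursuer along the tangent line toward $S_q^E$ leaves $S_q^E$ and $\beta_q$ unchanged.
\item It treats the negative $y$-axis separately via $S_q^E=0$.
\end{itemize}
Your parametrization $\xi_{p0}^E=S_q^E-L\tau$ instead turns the hypothesis $x_{p0}^E\ge 0$ into $L\ge r_e\tan(\beta/2)$ for all admissible positions at once, with equality exactly on the positive $y$-axis. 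This buys three things: it makes the paper's reduction rigorous, it derives $\beta<\pi$ rather than merely asserting it, and it absorbs the negative-axis case as $\beta=0$.

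The step you flagged does not need a continuity argument. Write $a\times b=a_1b_2-a_2b_1$. With $\xi_{p0}^E=S_q^E-L\tau$ and $S_q^E=C_{-1}^E+r_e n$, $n=[\cos\beta,\sin\beta]^\top$, you get $(C_{-1}^E-\xi_{p0}^E)\times(S_q^E-\xi_{p0}^E)=(-r_e n+L\tau)\times L\tau=-r_eL\le 0$. The other tangent point, $\xi_{p0}^E=S_q^E+L\tau$, gives $+r_eL$ instead. In both cases the angle between the two vectors is $\alpha_{-1}$. So the clockwise rotation $R(\alpha_{-1})^\top$ in \eqref{eq:def_S} selects exactly your tangent point.

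Moreover, $(S_q^E)^\top C_{-1}^E=r_e^2(1-\cos\beta)$, so \eqref{eq:beta} evaluates to $\arccos(\cos\beta)$ or $2\pi-\arccos(\cos\beta)$ according to the sign of $\sin\beta$. That is precisely your $\beta\in[0,2\pi)$, so your representation is consistent with the paper's definitions.
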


\begin{proof}
We first observe that $q\in-\sign(x_{p0}^E)$ implies that $|C_q^E-\xi_{p0}^E|\geq r_e$,  and thus $S_q^E$, $\alpha_q$ and $\beta_q$ are well defined.
Without loss of generality, we assume that $\xi_{p0}^E\in \{0\} \times \R_{> 0}$ and $q=-1$. In particular, if $x_{p0}^E\neq 0$, the pursuer first needs to reach the $y$-axis before reaching $S_q^E$, which can only increase $\bar{t}_p$. Moreover, if $x_{p0}^E=0$ and $y_{p0}^E< 0$ then $S_q^E=0$, i.e., $\bar{t}_e=0$ and $\bar{t}_p\geq 0$.

We observe that $\frac{\beta_{-1}}{2}=\arctan(\frac{y_{p0}^E}{r_e})<\pi$ and $y_{p0}^E = |S_{-1}^E-\xi_{p0}^E|$. Due to the fact that $z\geq \arctan(z)$ for all $z\geq 0$ it holds that
    $\tfrac{y_{p0}^E}{r_e} \geq \tfrac{\beta_{-1}}{2}$.
Using $\bar{t}_p= \frac{y_{p0}^E}{v_p}$ and $\bar{t}_e=\frac{\beta_{-1} r_e}{v_e}$ leads to
\begin{align*}
    \frac{\bar{t}_p v_p}{r_e} = \frac{y_{p0}^E}{r_e} \geq  \frac{\beta_{-1}}{2} = \frac{\bar{t}_e v_e}{2 r_e}.
\end{align*}
This expression can be simplified to
\begin{align*}
    \frac{\bar{t}_p}{\bar{t}_e}  \geq \frac{v_e}{2 v_p} \geq 1
\end{align*}
under Assumption \ref{as:speed_ratio}, which completes the proof.
\end{proof}

To proceed, we summarize the optimal strategies of the evader and the pursuer under Assumption \ref{as:speed_ratio}.

\begin{lemma} \label{lem:optimal_inputs_solutions}
 Consider an evader \eqref{eq:cartesianDynamics} with $\xi_{e\theta0}^E =[0,0,\frac{\pi}{2}]^\top$ and a pursuer \eqref{eq:cartesianDynamicsPursuer} with $\xi_{p0}^E\in \R^2$.
 Under Assumption \ref{as:speed_ratio} and $q\in -\sign(x_{p0}^E)$ the optimal strategies of the evader and the pursuer to maximize/minimize the distance between each other is given by
    \begin{align}
        u_e^{*_q}(t)&\in \left\{ \begin{array}{cc}
           q  &  \text{for } t< \bar{t}_e \\
           0  & \text{for } t \geq \bar{t}_e
        \end{array} \right. \label{eq:u_e*} \\
        u_p^{*_q}(t) &= \left\{ \begin{array}{cc}
         \arctan_2(S_q^E-\xi_{p0}^E) & \text{for } \xi_{p0}^E \neq 0  \\
             \frac{\pi}{2} & \text{for } \xi_{p0}^E =0
        \end{array} \right. \label{eq:u_p*}
    \end{align}
    and where $S_q^E$ and $\bar{t}_e$ are defined in \eqref{eq:def_S} and \eqref{eq:def_t_e}, respectively.
    The corresponding solutions in the $(x,y)$-plane satisfy
    \begin{align}
        \xi_e(t,u_e^{*_q}) &\! = \! \left\{ \! \! \!\! \begin{array}{cl}
             C_{q}^E -q r_e R(t \frac{v_e}{r_e})^{-q} \left[ \begin{smallmatrix}
         1  \\
         0 
    \end{smallmatrix}\right]  & \! \text{if } t\leq \bar{t}_e \\
            S_q^E \! +\! (t\!- \! \bar{t}_e) v_e \tfrac{S_q^E-\xi_{p0}^E}{|S_q^E-\xi_{p0}^E|} & \! \text{if } t\geq \bar{t}_e
        \end{array}  \right. \\
        \xi_p(t,u_p^{*_q}) & \! = \! \xi_{p0}^E+t v_p \tfrac{S_q^E-\xi_{p0}^E}{|S_q^E-\xi_{p0}^E|}. \label{eq:solution_xi_p1v1}
    \end{align}
\end{lemma}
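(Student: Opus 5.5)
This lemma restates the 1v1 solution of \cite{braun_prying_2025} in the coordinates fixed by the initial conditions. The proof will therefore combine Proposition \ref{prop:main_result_journal1v1} with the geometry of Section \ref{sec:1v1_geometric_solution}.

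\emph{Restricting the strategy class.} By Proposition \ref{prop:main_result_journal1v1}(b),(c) and Assumption \ref{as:speed_ratio}, the evader uses $u_e\in\{-1,0,1\}$ with at most one switch from $|u_e|=1$ to $u_e=0$, and the pursuer's heading is constant. So the evader turns around some $C_q^E$ for a time $\tau\ge 0$ and then goes straight, while the pursuer moves along a fixed ray from $\xi_{p0}^E$. It remains to determine $q$, $\tau$ and the pursuer's heading, and to check the saddle-point property.

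\emph{Identifying the parameters.} Once the evader has switched to straight motion, the relative motion is a straight line with both speeds constant. The pursuer then does best by moving along the evader's heading, since that minimises the rate at which the range grows; the evader does best when the pursuer lies directly behind it on its heading line. The evader should therefore stop turning exactly when its heading points away from the pursuer's current position along their common line. This is the tangency condition defining $S_q^E$: the tangent to the circle around $C_q^E$ at $S_q^E$ passes through $\xi_{p0}^E$, which gives $\alpha_q$ and \eqref{eq:def_S}. The pursuer anticipates this and travels on the straight line from $\xi_{p0}^E$ through $S_q^E$. That line coincides with the evader's subsequent straight path, which yields \eqref{eq:u_p*} and the expression for $\xi_p$ in \eqref{eq:solution_xi_p1v1}. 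The turn direction $q\in-\sign(x_{p0}^E)$ is the one that turns the evader away from the pursuer's side, and it guarantees $|C_q^E-\xi_{p0}^E|\ge r_e$, so $S_q^E$ is well defined. The switching time $\bar t_e$ follows from the arc length $\beta_q r_e$ in \eqref{eq:beta} and \eqref{eq:def_t_e}. The explicit trajectories then follow by integrating \eqref{eq:cartesianDynamics} along the circle and then the tangent line.

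\emph{Consistency check (main obstacle).} The key difficulty is showing that this pair is consistent, i.e.\ that it is the saddle point identified in \cite{braun_prying_2025} and not merely a plausible pair. Lemma \ref{lem:illustration_speed_ratio_assumption} gives $\bar t_p\ge\bar t_e$, so the evader reaches $S_q^E$ no later than the pursuer. After $\bar t_e$, both agents then move collinearly in the same direction, with the pursuer behind, and the range grows at the maximal rate $v_e-v_p$. I would first try to show that the evader gains nothing by deviating, by comparing the range rate $\dot r$ with the one-switch structure of Proposition \ref{prop:main_result_journal1v1}. I would then show, by contradiction, that any other constant pursuer heading yields a smaller range at the terminal time. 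If this direct argument becomes delicate, I would fall back on mapping the construction to the evader-centric coordinates of \cite{braun_prying_2025}. There the constant pursuer heading and the switch at alignment are exactly the characterised optimal solution for $v_e\ge 2v_p$, with the degenerate case $\xi_{p0}^E=0$ handled separately by heading $\tfrac{\pi}{2}$.
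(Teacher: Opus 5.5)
Your proposal follows essentially the same route as the paper, whose proof combines Proposition~\ref{prop:main_result_journal1v1} (the structural results of \cite{braun_prying_2025}), Lemma~\ref{lem:illustration_speed_ratio_assumption} ($\bar t_p\geq \bar t_e$) and the tangency geometry of Section~\ref{sec:1v1_geometric_solution}; as in the paper, saddle-point optimality ultimately rests on the characterization in \cite{braun_prying_2025}, which is what your fallback invokes. One small slip in your optional direct argument: the pursuer minimizes the range, so the claim to establish (by contradiction) is that no other heading yields a \emph{smaller} range, i.e.\ any deviation gives a larger range and an earlier escape, not that every other heading yields a smaller range.
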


\begin{proof}
The result follows immediately from the results derived in \cite{braun_prying_2025} summarized in Proposition  \ref{prop:main_result_journal1v1}, Lemma \ref{lem:illustration_speed_ratio_assumption} and the (geometric) calculations in this section.
\end{proof}

\begin{remark} \label{rem:suboptimal}
    While the optimal inputs in Lemma \ref{lem:optimal_inputs_solutions} rely on the assumption $q\in -\sign(x_{p0}^E)$, if $|C_{q}^E-\xi_{p0}^E|\geq r_e$, the definitions and solution representations 
    \eqref{eq:u_e*}-\eqref{eq:solution_xi_p1v1}, (even though they are not optimal,)
    are also valid if $q\notin -\sign(x_{p0}^E)$. Accordingly, \eqref{eq:u_e*}-\eqref{eq:solution_xi_p1v1} will be used for $q\in \{-1,1\}$ in the following. Moreover, observe that Lemma \ref{lem:optimal_inputs_solutions} is independent of the selection of the surveillance radius $\rho$ and thus, \eqref{eq:u_e*}-\eqref{eq:solution_xi_p1v1} are defined for all $t\in \R_{\geq 0}$.
\end{remark}

Based on the assumptions and definitions in Lemma \ref{lem:optimal_inputs_solutions}, the distance between the evader and the pursuer is given by\footnote{Since $\xi_{e\theta0}^E =[0,0,\frac{\pi}{2}]^\top$ by assumption, the distance is only a function of the pursuers initial position.}
\begin{align}
    &d_q(t;\xi_{p0}^E) \\
    &\!= \! \left\{ \! \! \! \begin{array}{cc}
     \left| 
    C_{q}^E -\xi_{p0}^E -q r_e R(t \frac{v_e}{r_e})^{-q} \left[ \begin{smallmatrix}
         1  \\
         0 
    \end{smallmatrix}\right] -t v_p \tfrac{S_q^E-\xi_{p0}^E}{|S_q^E-\xi_{p0}^E|}  \right|, & \!\!  t \leq \bar{t}_e \\
     |S_q^E\! + \! (t(v_e\! -\! v_p)\! -\! \bar{t}_e v_e)\tfrac{S_q^E-\xi_{p0}^E}{|S_q^E-\xi_{p0}^E|} + C_{q}^E-\xi_{p0}^E|,    & \! \! t \geq \bar{t}_e
    \end{array} 
     \right. \nonumber
\end{align}

The optimal value function satisfies $T^{*_q}=V(\xi_{e\theta 0}^E, \xi_{{p}0}^E)$, $q\in -\sign(x_{p0}^E)$, where $T^{*_q}$ is defined through the condition
\begin{align}
    \rho = d(T^{*_q},\xi_{p0}^E). \label{eq:T*}
\end{align}
Since the pursuer is following a straight line, we can also consider the pursuer to be stationary and define
\begin{align}
    &\bar{d}_q(t;\xi_{p0}^E) \\
    &= \left\{ \! \! \! \begin{array}{cc}
     \left| 
    C_{q}^E -\xi_{p0}^E -q r_e R(t \frac{v_e}{r_e})^{-q} \left[ \begin{smallmatrix}
         1  \\
         0 
    \end{smallmatrix}\right]  \right|, &  t \leq \bar{t}_e \\
     |S_q^E+(t-\bar{t}_e) v_e\tfrac{S_q^E-\xi_{p0}^E}{|S_q^E-\xi_{p0}^E|} + C_{q}^E-\xi_{p0}^E|,    & t \geq \bar{t}_e
    \end{array} 
     \right. \nonumber
\end{align}
which characterizes the distance of the evader to the initial position $\xi_{p0}$, i.e.,
in this case $T^{*_q}$ is defined through
   $ \rho + T^{*_q} v_p = \overline{d}_q(T^{*_q},\xi_{p0}^E)$.

\section{Extensions to the 2v1 setting} 
\label{sec:2v1}

In this section we discuss extensions of the results in the previous section to the 2v1 setting under the additional assumption that the pursuers are stationary. Before we derive optimal strategies, we introduce definitions and present an alternative coordinate system.

\subsection{Definitions, coordinate transformations \& first results}

To discuss the 2v1 setting in Problem \ref{prob:game_of_degree}, we consider a coordinate system defined through the initial positions of the two pursuers. Without loss of generality, we assume that
\begin{align}
    \xi_{p_10}^P= \left[ \begin{array}{c}
         -\chi  \\
         0 
    \end{array}\right], \qquad     
    \xi_{p_20}^P= \left[ \begin{array}{c}
         \chi  \\
         0 
    \end{array}\right],
\end{align}
and where $2\chi \geq 0$ defines the distance between the two pursuers. In this coordinate system, the initial position of the evader is given by $\xi_{e\theta0}^P \in\R^2\times (-\pi,\pi]$.
For $\rho\geq \chi$ we define 
\begin{align}
    H_j^P = \left[ \begin{array}{c}
         0  \\
         j \sqrt{\rho^2- \chi^2} 
    \end{array}\right], \qquad j\in \{-1,1\}, \label{eq:HP}
\end{align}
visualized in Fig.~\ref{fig:2v1_setting}. The points denote the intersection of circles with radius $\rho$ around the pursuers.
\begin{figure}[htb]
    \centering
    \begin{overpic}[width = 0.95\columnwidth]{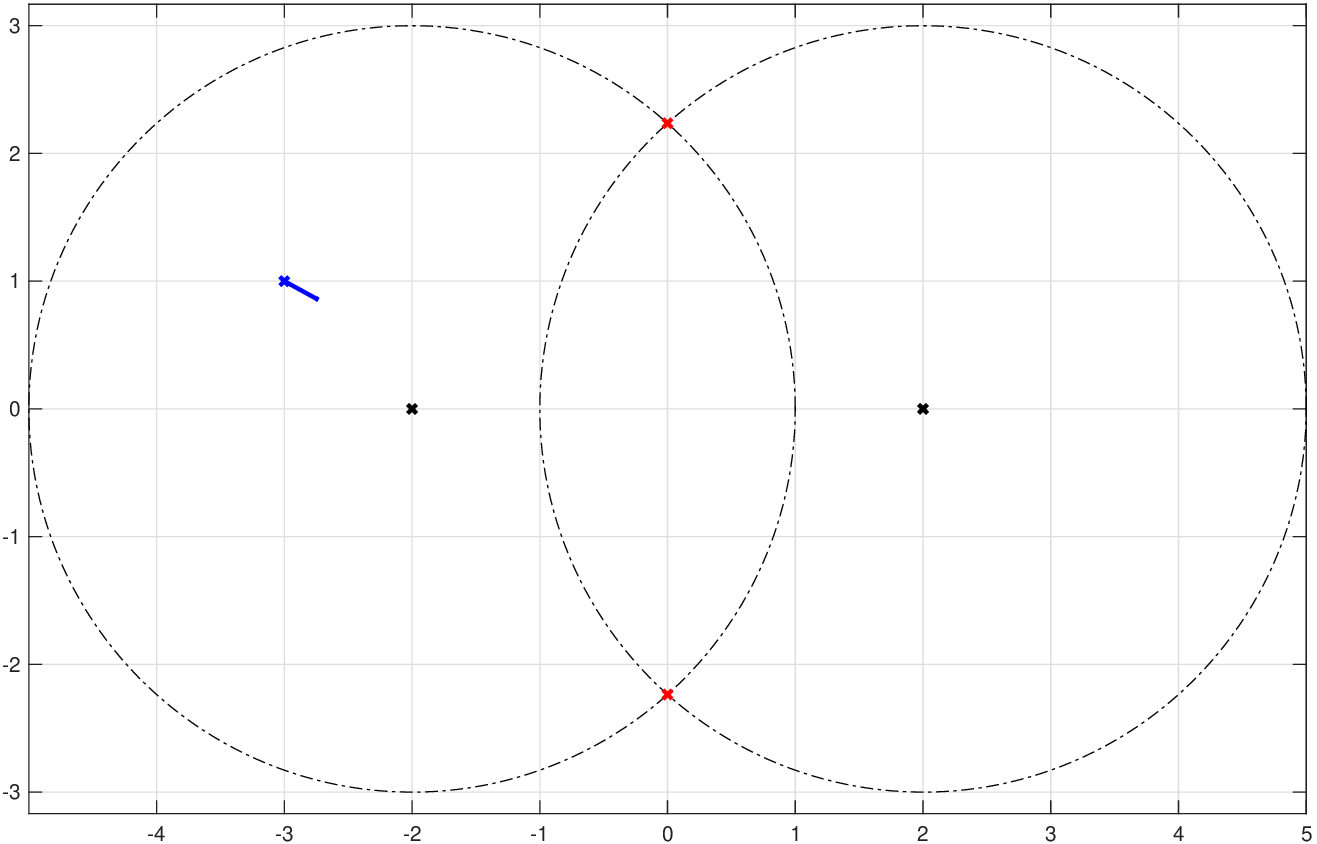}
    \put(20,45){\small{$\xi_{e\theta 0}^P$}}
    \put(47.5,6.5){\small{$H_1^P$}}
    \put(47.5,58){\small{$H_{-1}^P$}}
    \put(27,28){\small{$\xi_{p_10}^P$}}
    \put(68,28){\small{$\xi_{p_20}^P$}}
    \end{overpic}
    \caption{Coordinate system defined through the initial positions of the pursuers.}
    \label{fig:2v1_setting}
\end{figure}

For $c> 0$, $i\in\{1,2\}$, $s\in \{1,2\}\backslash \{i\}$, we define the sets
\begin{align}
\begin{split}
    \mathcal{P}_i^P(c) &= \{\xi \in \R^2| \ |\xi-\xi_{p_i0}^P|=c, \ |\xi-\xi_{p_s0}^P|> c\}, \\ 
    \bar{\mathcal{P}}_i^P(c) &=  \{\xi \in \R^2| \ |\xi-\xi_{p_i0}^P|=c, \ |\xi-\xi_{p_s0}^P|\leq c\}. 
\end{split} \label{eq:terminatiing_sets}
\end{align}
Here $\mathcal{P}_i^P(\rho)$, $i\in \{1,2\}$, denotes the set where the game of degree can end, while on $\bar{\mathcal{P}}_i^P(\rho)$, $i\in \{1,2\}$, the evader can potentially leave the surveillance radius of one pursuer but not the surveillance radius of the other pursuer.

To switch between the two coordinate systems denoted through $E$ and $P$ we define
\begin{align}
    \xi^E &= R(\theta-\tfrac{\pi}{2})^\top(\xi^P- \xi_{e0}^P), \qquad \xi^P \in \R^2, \label{eq:co_trans_P_to_E} \\
    \xi^P &= R(\theta-\tfrac{\pi}{2}) \xi^E+\xi_{e0}^P , \ \, \quad \qquad \xi^E \in \R^2, \label{eq:co_trans_E_to_P}
\end{align}
to go from $P$ to $E$ and from $E$ to $P$, respectively.
This in particular allows us to calculate
\begin{alignat*}{3}
    \xi_{p_i0}^E &= R(\theta-\tfrac{\pi}{2})^\top(\xi_{p_i0}^P- \xi_{e0}^P), \qquad && i\in \{1,2\}, \\
    H_{q}^E &= R(\theta-\tfrac{\pi}{2})^\top(H_{q}^P- \xi_{e0}^P), \qquad && q\in \{-1,1\}, \\
        S_q^P &= R(\theta-\tfrac{\pi}{2}) S_q^E+\xi_{e0}^P, \qquad && q\in \{-1,1\}, \\
            C_q^P &= R(\theta-\tfrac{\pi}{2}) C_q^E+\xi_{e0}^P, \qquad && q\in \{-1,1\}. 
\end{alignat*}
to switch between the different coordinate systems for the points highlighted in Figs. \ref{fig:1v1_setting} and \ref{fig:2v1_setting}.
Since we can easily switch between the two coordinate systems, we can use the strategies in Lemma \ref{lem:optimal_inputs_solutions} to state the following result.

\begin{lemma} \label{lem:strategy_dominant}
    Consider  
    Problem \ref{prob:game_of_degree} with $v_{p_1}=v_{p_2}=0$.
    Let $\xi_{e0}^P\in \mathcal{B}_{\rho}(\xi_{p_10}^P)\cup \mathcal{B}_{\rho}(\xi_{p_20}^P)$, define 
$q_i\in -\sign(x_{p_i0}^E)$ for $i\in \{1,2\}$ and let 
$u_e^{*_{q_i}}(t)$ denote the optimal control strategies from the 1v1 setting  defined in
\eqref{eq:u_e*} with corresponding optimal values $T^i = V(\xi_{e\theta 0}^P,\xi_{{p_i}0}^P)$.

If there exists $i\in \{1,2\}$ such that $\xi_{e}^P(T^i)\in \mathcal{P}_i^P(\rho)$ and $T^i>0$, then $u_e^{*_{q_i}}(t)$ is optimal for the 2v1 setting.
\end{lemma}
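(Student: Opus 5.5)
The argument sandwiches the 2v1 value between two quantities that both equal $T^i$. With static pursuers there is no maximisation over $u_p$, so Problem \ref{prob:game_of_degree} reduces to the evader's minimum-time problem. Let $V^{2}$ denote this 2v1 value,
\begin{align*}
V^{2} = \inf\{T\ge 0 \mid \exists u_e : |\xi_e(T)-\xi_{p_10}^P|>\rho,\ |\xi_e(T)-\xi_{p_20}^P|>\rho\}.
\end{align*}
The plan is to prove $V^{2}\ge T^i$ and $V^{2}\le T^i$, where the upper bound is attained by $u_e^{*_{q_i}}$.

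\textbf{Lower bound.} Any evader trajectory that ends the 2v1 game at time $T$ satisfies $|\xi_e(T)-\xi_{p_i0}^P|>\rho$ in particular. It is therefore admissible for the 1v1 game against pursuer $i$ alone, which also has a static pursuer. Hence $T\ge V(\xi_{e\theta0}^P,\xi_{p_i0}^P)=T^i$, and taking the infimum gives $V^{2}\ge T^i$. By symmetry $V^2\ge \max(T^1,T^2)$, although only the bound for index $i$ is needed.

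\textbf{Upper bound.} Apply $u_e^{*_{q_i}}$ from \eqref{eq:u_e*}, transformed into $P$-coordinates via \eqref{eq:co_trans_E_to_P}. Since $v_{p_i}=0$, Lemma \ref{lem:optimal_inputs_solutions} holds trivially, because Assumption \ref{as:speed_ratio} is satisfied. By \eqref{eq:T*}, this input brings the evader to $|\xi_e^P(T^i)-\xi_{p_i0}^P|=\rho$ at time $T^i$. The hypothesis $\xi_e^P(T^i)\in\mathcal{P}_i^P(\rho)$ adds the strict inequality $|\xi_e^P(T^i)-\xi_{p_s0}^P|>\rho$ for the other pursuer $s$.

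The delicate point is the terminal condition. At $T^i$ the distance to pursuer $i$ is exactly $\rho$, not strictly larger, so the game as written ends only at the infimum time. I would close this gap by continuity. For $T^i>0$, the optimal 1v1 trajectory leaves the ball, meaning $d$ crosses $\rho$ and exceeds it for all $t\in(T^i,T^i+\varepsilon)$, because after $\bar t_e$ the evader moves straight away from the pursuer. By continuity, $|\xi_e(t)-\xi_{p_s0}^P|>\rho$ also holds on a neighbourhood of $T^i$. Hence the escape condition holds for $t$ arbitrarily close to $T^i$ from above, and the infimum defining the terminal time equals $T^i$.

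This step is where the hypothesis $T^i>0$ is used. It guarantees that the evader starts inside the closed ball of pursuer $i$, so the 1v1 value is attained by a genuine exit. It also excludes the degenerate case in which the evader starts outside ball $i$ but inside ball $s$, where $T^i=0$ says nothing about escaping pursuer $s$.

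\textbf{Conclusion.} Combining the two bounds gives $V^{2}=T^i$, and $u_e^{*_{q_i}}$ attains this value, so it is optimal for the 2v1 setting. The main obstacle I expect is the rigorous handling of the strict-versus-nonstrict terminal inequality and of the infimum in Problem \ref{prob:game_of_degree}. The remaining steps are a direct relaxation argument.
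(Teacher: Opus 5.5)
Your proposal is correct and follows the same argument as the paper. The lower bound $V^2\ge T^i$ holds because escaping both pursuers entails escaping pursuer $i$, and it is attained because $\xi_e^P(T^i)\in\mathcal{P}_i^P(\rho)$ leaves the constraint for the other pursuer strictly satisfied at $T^i$; you additionally make explicit the strict-versus-nonstrict terminal condition, which the paper glosses over. One small correction: the exit at $T^i$ can occur during the turning phase ($T^i<\bar t_e$), so appealing to the straight segment does not cover all cases, and you should instead obtain exit times arbitrarily close above $T^i$ directly from $T^i$ being the infimum of the 1v1 exit times under $u_e^{*_{q_i}}$.
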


\begin{proof}
Let $i\in \{1,2\}$ such that   $\xi_{e}^P(T^i)\in \mathcal{P}_i^P(\rho)$ which exists according to the assumptions. Since $u_e^{*_{q_i}}(t)$ is optimal for the 1v1 setting the 2v1 setting cannot end before $T^i$, i.e., $T^i$ provides a lower bound for the 2v1 setting. Since $\xi_{e}^P(T^i)\in \mathcal{P}_i^P(\rho)$ and $T^i>0$ the 2v1 game ends at time $T^i$, i.e., $u_e^{*_{q_i}}(t)$ is optimal for the 2v1 game.
\end{proof}

A setting where the assumptions of Lemma \ref{lem:strategy_dominant} are satisfied is shown in Fig.~\ref{fig:2v1_dominant}.
\begin{figure}[htb]
    \centering
    \begin{overpic}[width = 0.95\columnwidth]{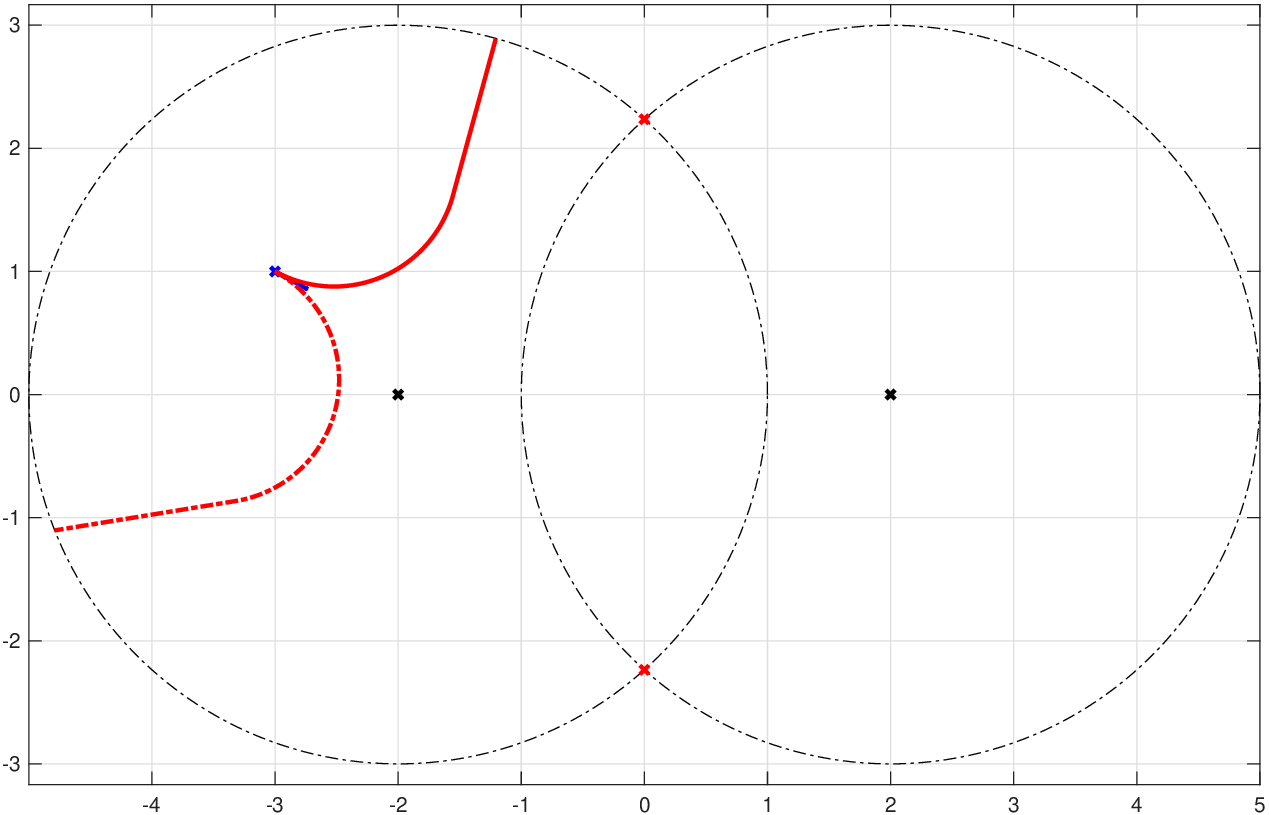}
    \put(20,45){\small{$\xi_{e\theta 0}^P$}}
    \put(47.5,6.5){\small{$H_1^P$}}
    \put(47.5,58){\small{$H_{-1}^P$}}
    \put(27,28){\small{$\xi_{p_10}^P$}}
    \put(68,28){\small{$\xi_{p_20}^P$}}
    \end{overpic}
    \caption{Visualization of the setting where the assumptions in Lemma \ref{lem:strategy_dominant} are satisfied. The optimal strategy with respect to pursuer 1 is shown in a solid line while the optimal strategy with respect to pursuer 2 is shown as a dashed line. From Lemma \ref{lem:strategy_dominant} we can conclude that the strategy focusing on pursuer 1 is optimal.}
    \label{fig:2v1_dominant}
\end{figure}
The case where $\xi_{e}^P(T^i)\notin \mathcal{P}_i^P(\rho)$ for $i\in \{1,2\}$ is discussed in the next section.

\subsection{Shortest path to a point \& (sub)optimal strategies}

The optimal control strategies from the 1v1 setting are not sufficient to describe the optimal strategies for the 2v1 setting. The 1v1 inputs $u_e^{*_{q_1}}(t)$ and $u_e^{*_{q_2}}(t)$, $q\in -\sign(x_{p_i0}^E)$, may lead to trajectories such that $\xi_e^P(T^{1})\in \bar{\mathcal{P}}_1^P(\rho)$ and $\xi_e^P(T^{2})\in \bar{\mathcal{P}}_2^P(\rho)$. In this case it might be better to aim for the points $H_j^P$, $j\in \{-1,1\}$, to terminate the game.

In this section we derive the general shortest path from $\xi_{e\theta0}^E$ to a point $\kappa^E \in \R^2$ (without final orientation) satisfying the dynamics \eqref{eq:cartesianDynamics} with constant speed $v_e$. In this case, the shortest path and the corresponding maneuver are characterized through a turn-straight or a turn-turn maneuver, where each component might be of length zero \cite[Cor. 2]{braun_capture_2023}, \cite{dubins_curves_1957}.

For the derivation we again use the coordinate system $E$ in Fig.~\ref{fig:1v1_setting}, providing a simple representation of the initial position $\xi_{e\theta0}^E$ and the turning points $C_q^E$, $q\in \{-1,1\}$.
We define $q\in \sign(\kappa^E(1))$, where $\kappa^E(1)$ denotes the first component of $\kappa^E=[\kappa^E(1),\kappa^E(2)]^\top$, and we assume $\kappa^E\notin \mathcal{B}_{r_e}(C_q^E)$, i.e., we turn around $C_q^E$, where $q$ depends on the sign of the $x$-component of the target point and the distance between $\kappa^E$ and $C_q^E$ is larger or equal to $r_e$ (see \eqref{eq:turning_radius}). The setting is visualized in Fig.~\ref{fig:1v1_setting_bar}.
\begin{figure}[b]
    \centering
    \begin{overpic}[width = 1\columnwidth]{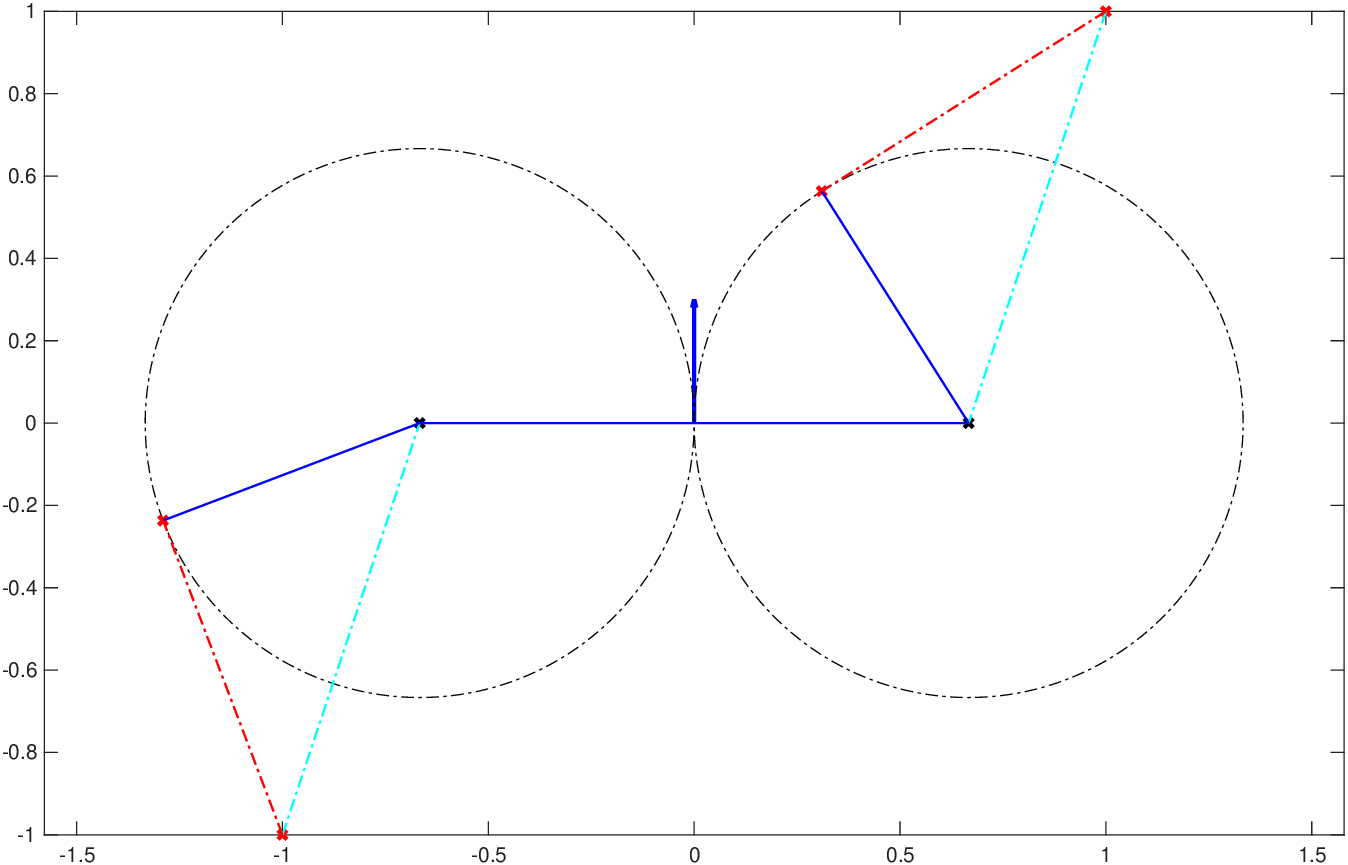}
    \put(53,28){\small{$\xi_{e0}^E$}}
    \put(70,28){\small{$C_1^E$}}
    \put(29,28){\small{$C_{-1}^E$}}
    \put(5,23){\small{$\bar{S}_{-1}^E$}}
    \put(82,59.5){\small{$\kappa_1^E$}}
    \put(13,3.5){\small{$\kappa_{-1}^E$}}
    \put(18,9){\small{$\bar{\alpha}_{-1}$}}
    \put(28,35){\small{$\bar{\beta}_{-1}$}}
    \put(74,55){\small{$\bar{\alpha}_{1}$}}
    \put(65,35){\small{$\bar{\beta}_{1}$}}
    \put(54,51){\small{$\bar{S}_{1}^E$}}
    \end{overpic}
    \caption{Shortest path to a point $\kappa_q^E$, characterized through a turn-straight maneuver.}
    \label{fig:1v1_setting_bar}
\end{figure}
Under these assumptions, the shortest path is given by a turn-straight maneuver (where both segments can be of length zero) and the case turn-turn can be ruled out.
The derivation of the parameters in Fig.~\ref{fig:1v1_setting_bar} follow similar steps as in Section \ref{sec:1v1_geometric_solution}. In particular, for $q\in \{-1,1\}$, we define
\begin{align}
    \bar{\alpha}_q &= \arcsin\left( \frac{r_e}{|C_{q}^E-\kappa^E|} \right), \label{eq:alpha_bar} \\
    \bar{S}_q^E &= \frac{|\bar{S}_q^E-\kappa^E|}{|C_{q}^E-\kappa^E|} R(\alpha_q)^{-q} (C_{q}^E-\kappa^E) +\kappa^E. \label{eq:def_S_bar} 
\end{align}
and where $|\bar{S}_q^E-\kappa^E|$ can be obtained from 
\begin{align*}
    |\bar{S}_q^E-\kappa^E| = \sqrt{|C_{q}^E-\kappa^E|^2 - r_e^2}.
\end{align*}
Here, compared with \eqref{eq:def_S}, the inverse of the rotation matrix is used.
The angle $\bar{\beta}_q$ characterizing the arc from $\xi_{e0}^E$ to $\bar{S}_q^E$ is given by
\begin{align}
    \bar{\beta}_q &= \left\{ \! \!\! \! \begin{array}{cc}
         \arccos \left( \tfrac{r_e^2- (\bar{S}_q^E)^\top C_{q}^E}{r_e^2}    \right) & \text{if } \bar{S}_q^E(2) \geq 0  \\
         2\pi \! - \! \arccos \left( \! \tfrac{r_e^2- (\bar{S}_q^E)^\top C_{q}^E}{r_e^2}    \right) &  \text{if } \bar{S}_q^E(2) \leq 0
    \end{array} \right. \label{eq:beta_bar}
\end{align}
similar to \eqref{eq:beta}.

The calculations in this section can be applied to characterize the shortest path to $H_j^P$, $j\in \{-1,1\}$, and where we can again easily switch between $H_j^P$ and $H_j^E$ through the transformations in \eqref{eq:co_trans_P_to_E}, \eqref{eq:co_trans_E_to_P}.

\begin{lemma} \label{lem:optimal_strategy_to_point}
Consider \eqref{eq:cartesianDynamics} and $H_j^E\in \R^2$, $j\in \{-1,1\}$, defined in \eqref{eq:HP}. Let $q\in \sign(H_j^E(1))$ and assume that  $H_j^E\notin \mathcal{B}_{r_e}(C_q^E)$, $r_e>0$, and $C_q^E\in \R^2$ (see \eqref{eq:C_q}). Then the shortest path from $\xi_{e0}$ to $H_j^E$ satisfying the dynamics \eqref{eq:cartesianDynamics} is characterized through the input and the trajectory
    \begin{align}
        u_e^{H_j^E}(t)&\in \left\{ \begin{array}{cl}
           q  &  \text{if } t< \frac{\bar{\beta}_q}{v_e}r_e \\
           0  & \text{if } t \geq \frac{\bar{\beta}_q}{v_e}r_e
        \end{array} \right. \label{eq:u_eH} \\
\! \!        \xi_e(t,u_e^{H_j^E}) &\! = \! \left\{ \! \! \!\! \begin{array}{cl}
             C_{q}^E -q r_e R(t \frac{v_e}{r_e})^{-q} \left[ \begin{smallmatrix}
         1  \\
         0 
    \end{smallmatrix}\right]  &  \! \text{if } t\leq \frac{\bar{\beta}_q}{v_e} r_e \\
            \bar{S}_q^E \! +\! (t\!- \! \frac{\bar{\beta}_q}{v_e}) v_e \tfrac{H_j^E-\bar{S}_q^E}{|H_j^E-\bar{S}_q^E|} &  \! \text{if } t\geq \frac{\bar{\beta}_q}{v_e} r_e
        \end{array}  \right.  \label{eq:opt_H_j_e}
    \end{align}
where $\bar{S}_q^E$ and $\bar{\beta}_q$
are defined in
\eqref{eq:def_S_bar} and 
\eqref{eq:beta_bar}, respectively.
\end{lemma}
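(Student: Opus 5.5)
The plan is to reduce the statement to the known characterization of shortest Dubins-type paths to a point without prescribed final heading, and then verify by elementary geometry that the parameters \eqref{eq:alpha_bar}--\eqref{eq:beta_bar} describe exactly that path. By \cite[Cor. 2]{braun_capture_2023} and \cite{dubins_curves_1957}, a shortest path from $\xi_{e\theta0}^E=[0,0,\tfrac{\pi}{2}]^\top$ to a free-heading terminal point $\kappa^E=H_j^E$ is either turn-straight or turn-turn, with segments possibly of zero length, and every turn is at the maximal rate $|u_e|=1$. In coordinate system $E$ the candidate turns are the circles of radius $r_e$ centered at $C_{\pm1}^E$ from \eqref{eq:C_q}. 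The proof then splits into two tasks: selecting the admissible maneuver and computing its parameters.

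\emph{Maneuver selection.} First, with $q\in\sign(H_j^E(1))$, I show that turning toward the side containing the target is optimal. For the opposite turn (center $C_{-q}^E$), any turn-straight path has to swing the heading through a larger angle before it can point at $H_j^E$. I would compare arc lengths directly to confirm this is never shorter; the case $H_j^E(1)=0$ is symmetric, and both choices are allowed by the set-valued $\sign$. Second, I rule out turn-turn. A turn-turn path is only needed when the target lies inside the turning disk $\mathcal{B}_{r_e}(C_q^E)$, since otherwise the target cannot be reached by a tangent line from that circle. Because $H_j^E\notin\mathcal{B}_{r_e}(C_q^E)$ by assumption, a tangent from $H_j^E$ to the circle exists, so the shortest path is turn-straight with the turn around $C_q^E$.

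\emph{Parameter computation.} Let $\bar S_q^E$ be the tangency point on the circle around $C_q^E$. The triangle $C_q^E,\bar S_q^E,H_j^E$ has a right angle at $\bar S_q^E$, so the following hold:
\begin{itemize}
\item $|\bar S_q^E-H_j^E|=\sqrt{|C_q^E-H_j^E|^2-r_e^2}$ by Pythagoras;
\item the angle at $H_j^E$ is $\bar\alpha_q$ from \eqref{eq:alpha_bar};
\item $\bar S_q^E$ is obtained by rotating the unit vector $(C_q^E-H_j^E)/|C_q^E-H_j^E|$ by $\bar\alpha_q$ in the direction fixed by $q$ and scaling it, which gives \eqref{eq:def_S_bar}.
\end{itemize}
The rotation direction is reversed compared with \eqref{eq:def_S} because the tangent is now traversed toward the point rather than away from it. Of the two tangents, one must choose the one consistent with the travel direction along the circle.

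\emph{Arc angle and trajectory.} The arc angle follows from the law of cosines on the chord $|\bar S_q^E-\xi_{e0}^E|=|\bar S_q^E|$, which reproduces the argument of $\arccos$ in \eqref{eq:beta_bar}. The sign of $\bar S_q^E(2)$ then decides whether the arc exceeds $\pi$. The trajectory \eqref{eq:opt_H_j_e} is obtained by integrating \eqref{eq:cartesianDynamics} with $u_e=q$ up to time $\bar\beta_q r_e/v_e$, exactly as in Lemma \ref{lem:optimal_inputs_solutions}, and then with $u_e=0$ along the unit direction $(H_j^E-\bar S_q^E)/|H_j^E-\bar S_q^E|$.

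The main obstacle is the selection step: rigorously excluding the opposite-direction turn and the wrong one of the two tangent lines, especially in edge cases such as $H_j^E$ lying behind the evader, where arcs longer than $\pi$ occur. For this I would rely on the classification in \cite{braun_capture_2023} and check that the arc length $\bar\beta_q r_e$ plus the straight segment is monotone in the heading swept.
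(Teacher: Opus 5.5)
Your proposal is correct and takes essentially the paper's route. You invoke the turn-straight/turn-turn classification of \cite[Cor.~2]{braun_capture_2023}, \cite{dubins_curves_1957}, use $H_j^E\notin\mathcal{B}_{r_e}(C_q^E)$ to discard turn-turn, and build the turn and straight segments from the same tangent geometry as Lemma~\ref{lem:optimal_inputs_solutions}; your checks of the tangent orientation, the law-of-cosines form of $\bar\beta_q$, and the turning side just fill in what the paper leaves to ``the definitions in this section.'' As in the paper, excluding turn-turn paths rests on the cited classification itself, not merely on a tangent from $H_j^E$ existing, since turn-turn paths also reach points outside $\mathcal{B}_{r_e}(C_q^E)$.
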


\begin{proof}
Follows from the definitions in this section and the characterization of Dubin's paths in \cite[Cor. 2]{braun_capture_2023}, \cite{dubins_curves_1957}. Specifically, since $H_j^E\notin \mathcal{B}_{r_e}(C_q^E)$ by assumption, the point $H_j^E$ is reached through a turn-straight maneuver (instead of a turn-turn maneuver). The turn and the straight maneuver are characterized through the two cases in \eqref{eq:u_eH} and \eqref{eq:opt_H_j_e}, and the two cases are constructed similar to Lemma \ref{lem:optimal_inputs_solutions}  with $\frac{\bar{\beta}_q}{v_e}r_e$ denoting the time when the control is switched.
\end{proof}

\begin{remark}
    If $v_p>0$, the optimal pursuer strategy $u_p^{H_j^E}$ and trajectory corresponding to Lemma \ref{lem:optimal_strategy_to_point} are given by
    \begin{align}
        u_p^{H_j^E}(t) &= \left\{ \begin{array}{cc}
         \arctan_2(\bar{S}_q^E-\xi_{p0}^E) & \text{for } \xi_{p0}^E \neq 0  \\
             \frac{\pi}{2} & \text{for } \xi_{p0}^E =0
        \end{array} \right. \label{eq:u_pH} \\
        \xi_p(t,u_p^{H_j^E}) &  =  \xi_{p0}^E+t v_p \tfrac{\bar{S}_q^E-\xi_{p0}^E}{|\bar{S}_q^E-\xi_{p0}^E|}. \label{eq:_xi_pH}
    \end{align}
\end{remark}

With Lemma \ref{lem:optimal_strategy_to_point}, we are ready to state the next result, capturing in which cases the evader should aim for $H_j^E$ to terminate the game. The setting is visualized in Fig.~\ref{fig:2v1_no_dominant}.
\begin{figure}[htb]
    \centering
    \begin{overpic}[width = 1\columnwidth]{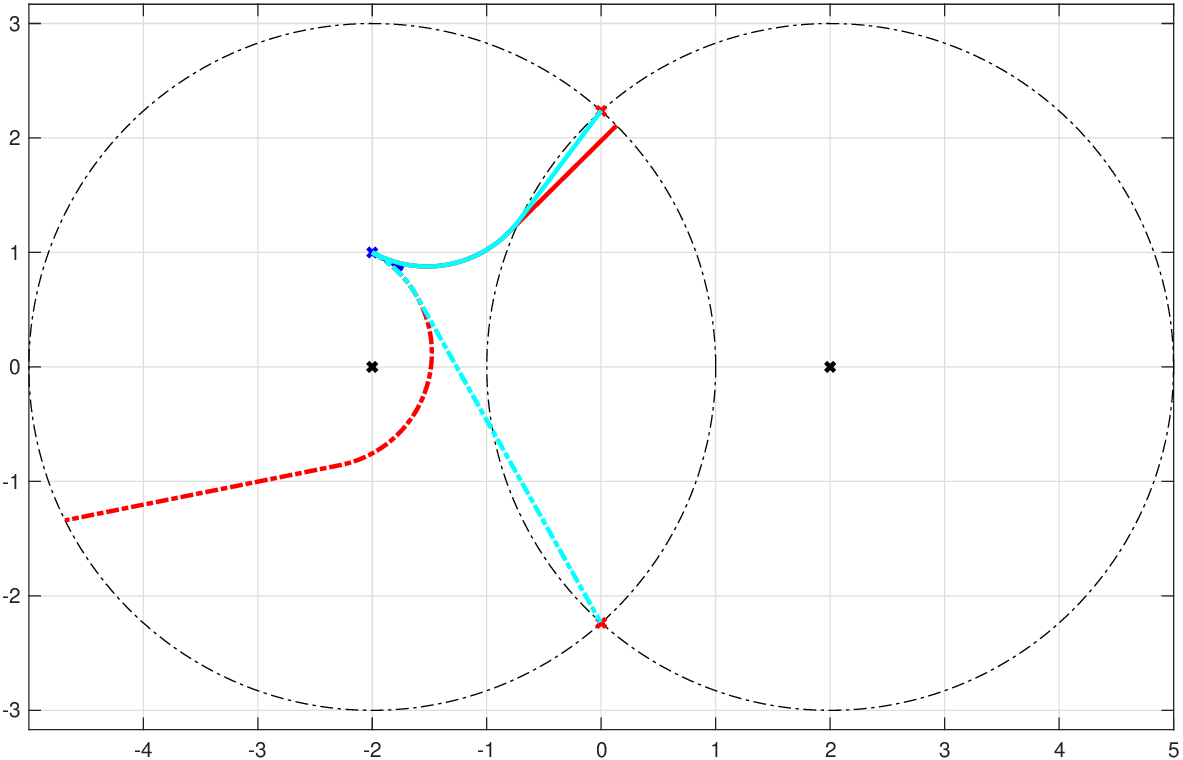}
    \put(28,45){\small{$\xi_{e\theta 0}^P$}}
    \put(47.5,6.5){\small{$H_1^P$}}
    \put(47.5,58){\small{$H_{-1}^P$}}
    \put(27,28){\small{$\xi_{p_10}^P$}}
    \put(68,28){\small{$\xi_{p_20}^P$}}
    \end{overpic}
    \caption{Visualization of the setting where the assumptions in Lemma \ref{lem:strategy_dominant} are not satisfied. Instead, the evader has to aim for the points $H_j^{P}$, $j\in \{-1,1\}$ as outlined in Lemma \ref{lem:strategy_same_sides}.}
    \label{fig:2v1_no_dominant}
\end{figure}

\begin{lemma} \label{lem:strategy_same_sides}
Consider Problem \ref{prob:game_of_degree} with $v_{p_1}=v_{p_2}=0$.
    Let $\xi_{e0}^P\in \mathcal{B}_{\rho}(\xi_{p_10}^P)\cup \mathcal{B}_{\rho}(\xi_{p_20}^P)$, define 
$q_i\in -\sign(x_{p_i0}^E)$ for $i\in \{1,2\}$ and let 
$u_e^{*_{q_i}}(t)$ denote the optimal control strategies from the 1v1 setting  defined in
\eqref{eq:u_e*} with corresponding optimal values $T^i = V(\xi_{e\theta 0}^P,\{\xi_{{p_i}0}^P\}_{i=1}^2)$.
Assume that $\xi_{e}^P(T^i)\in \bar{\mathcal{P}}_i^P(\rho)$ for all $i\in \{1,2\}$, 
for $j\in \{-1,1\}$ define $q_j\in \sign(H_{j}^E(1))$ and assume that  $H_{q_j}^E\notin \mathcal{B}_{r_e}(C_{q_j}^E)$ for all $j\in \{-1,1\}$.
For $j\in \{-1,1\}$ consider 
$u_e^{H_j^E}(t)$ and $\xi_e^E(t,u_e^{H_j^E})$ (see 
\eqref{eq:u_eH}, \eqref{eq:opt_H_j_e}), let $\xi_e(T^{H_j},u_e^{H_j^E})=H_j^E$ and 
$
    k= \argmin_{j\in \{-1,1\}} T^{H_j}.
$

Then $u_e^{H_k^E}(t)$ is the optimal evader strategy and $T^{H_k}$ is the minimal time to reach the set  $\overline{\mathcal{P}_1^P(\rho) \cup \mathcal{P}_2^P(\rho)}$.
\end{lemma}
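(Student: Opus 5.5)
The plan is to reduce the 2v1 problem with static pursuers to a minimum-time problem for the evader alone. With $v_{p_1}=v_{p_2}=0$ the pursuers play no role, so $V$ is the minimal time for the Dubins vehicle \eqref{eq:cartesianDynamics} to reach the terminal set. This set is the closure of $\mathcal{P}_1^P(\rho)\cup\mathcal{P}_2^P(\rho)$: the boundary of $\mathcal{B}_\rho(\xi_{p_10}^P)\cup\mathcal{B}_\rho(\xi_{p_20}^P)$, consisting of two outer arcs that meet at $H_{-1}^P$ and $H_1^P$. Because the evader starts inside the union, every admissible trajectory must cross this boundary. So it suffices to show that the first boundary point reached by any time-optimal trajectory is $H_{-1}^P$ or $H_1^P$. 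Once that holds, the minimal time equals $\min_{j} T^{H_j}$, and this value is realized by the Dubins paths of Lemma \ref{lem:optimal_strategy_to_point}.

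\textbf{Step 1 (lower bound).} Suppose an optimal trajectory reaches the terminal set at time $T$ at a point $\xi$ in the relative interior of one arc, say $\xi\in\mathcal{P}_1^P(\rho)$. I first argue that the same control, applied in the 1v1 game against pursuer 1 alone, ends no later than $T$. By optimality of $u_e^{*_{q_1}}$ in the 1v1 setting (Proposition \ref{prop:main_result_journal1v1} and Lemma \ref{lem:optimal_inputs_solutions}), this gives $T\geq T^1$.

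\textbf{Step 2 (contradiction via time-monotonicity).} The 1v1 optimal trajectory exits $\mathcal{B}_\rho(\xi_{p_10}^P)$ at time $T^1$ at a point of $\bar{\mathcal{P}}_1^P(\rho)$, that is, inside the second disk. I then need to show that no trajectory can exit disk 1 through $\mathcal{P}_1^P(\rho)$ at any time $T$ with $T^1\le T< \min_j T^{H_j}$. My idea is a time-monotonicity argument along the circle of radius $\rho$: the minimal time to reach a point of $\partial\mathcal{B}_\rho(\xi_{p_10}^P)$ increases monotonically as that point moves away from the 1v1 optimal exit point. The region $\bar{\mathcal{P}}_1^P(\rho)$ is an arc bounded by $H_{\pm1}^P$, so any point of $\mathcal{P}_1^P(\rho)$ lies beyond $H_{-1}^P$ or $H_1^P$ along this circle. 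It would therefore cost at least $T^{H_{-1}}$ or $T^{H_1}$, respectively. The same reasoning applies with the roles of the pursuers swapped.

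\textbf{Step 3 (conclusion).} Under the hypothesis $H_{j}^E\notin\mathcal{B}_{r_e}(C_{q_j}^E)$, Lemma \ref{lem:optimal_strategy_to_point} gives the exact minimal time $T^{H_j}$ and the turn-straight control $u_e^{H_j^E}$. Taking $k=\argmin_j T^{H_j}$ then yields both the optimal value and the optimal strategy.

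The main obstacle is the monotonicity claim in Step 2. Minimal Dubins time to points on a circle is not obviously unimodal, since the turn-turn cases and the wrap-around of $\bar\beta_q$ could create local minima. My first attempt would use the distance function $\bar d_q(t;\xi_{p_10}^E)$ from Section \ref{sec:1v1_geometric_solution}: any boundary point first reached at time $t$ has distance exactly $\rho$ from $\xi_{p_10}^P$ at time $t$. Sweeping the turn-then-straight families with switching times between $0$ and $\bar t_e$ traces out the reachable boundary points continuously in time. By continuity (an intermediate-value argument), reaching the arc $\mathcal{P}_1^P(\rho)$ would then force the trajectory to pass $H_{\pm1}^P$ first, or to reach them at least as fast. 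If that sweep argument fails, the fallback is to cite a Pontryagin transversality condition at interior boundary points. It forces the final heading to be radial with respect to $\xi_{p_10}^P$, which identifies the 1v1 optimal path as the unique interior candidate; that candidate is excluded because it ends in $\bar{\mathcal P}_1^P(\rho)$. The minimum must then be attained at a corner $H_j^P$.
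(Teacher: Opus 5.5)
\paragraph{Where the argument fails.}
Your Step~2 is where the argument breaks, and it cannot be repaired. The monotonicity claim is false for Dubins dynamics, and so is the lemma as stated. Your outline (the 1v1 lower bound, then "the cheapest terminal point is a corner") is essentially the paper's. The paper's proof simply asserts that the closest terminal point is one of $H_{\pm1}^P$, so you have located the weak point of that argument as well.

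\paragraph{Counterexample.}
Take $\chi=1$, $\rho=\sqrt2$ (so $H_{\pm1}^P=(0,\pm1)$) and $r_e=0.1$. Place the evader at $\xi_{e0}^P=(0,-0.99)$ heading in the $+y$ direction, so $\xi^E=\xi^P-\xi_{e0}^P$; it lies in both disks.
\begin{itemize}
\item Against pursuer~1, the 1v1-optimal turn is toward $+x$. It leaves $\mathcal B_\rho(\xi_{p_10}^P)$ after turning about $1.56$ rad, at about $(0.099,-0.89)$. On circle~1 one has $|\xi-\xi_{p_20}^P|^2=\rho^2-4\chi x$, so this point lies in $\bar{\mathcal P}_1^P(\rho)$. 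The mirror statement holds for pursuer~2.
\item $H_1^E=(0,1.99)$ and $H_{-1}^E=(0,-0.01)$ lie outside both turning disks, so every hypothesis holds. The corner times are $T^{H_1}=1.99/v_e$ and $T^{H_{-1}}\approx\bigl(0.1(2\pi-0.199)+0.01\bigr)/v_e\approx0.618/v_e$. The ties $H_{\pm1}^E(1)=0$ can be removed by a small perturbation, since all inequalities hold with margin.
\item The same turn toward $+x$ is also the turning phase of $u_e^{H_{-1}^E}$ with $q=1$; the choice $q=-1$ is the mirror image. This turn leaves $\mathcal B_\rho(\xi_{p_20}^P)$ after about $4.72$ rad, at about $(0.099,-1.09)$. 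That point is at distance about $1.55>\sqrt2$ from $\xi_{p_10}^P$.
\end{itemize}
Hence a point of $\mathcal P_2^P(\rho)$ is reached at time about $0.472/v_e<\min_jT^{H_j}$. Along circle~2 the minimal Dubins time rises from the 1v1 exit point to $H_{-1}^P$ and falls again beyond it. The path to the cheaper corner even leaves the union before reaching that corner.

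\paragraph{Why your repair routes fail.}
The sweep/intermediate-value argument needs the free-heading Dubins time to be continuous and unimodal along the circle. It is neither: the function is discontinuous across parts of the turning circles $\partial\mathcal B_{r_e}(C_q^E)$. Here $H_{-1}^E$ sits just outside one of them, which is why reaching it costs almost a full loop while nearby points of $\mathcal P_2^P(\rho)$ are cheaper.

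The transversality fallback is misstated. At an interior point of $\mathcal P_i^P(\rho)$ it is the terminal costate $(\lambda_x,\lambda_y)$ that must be radial, not the heading. The trajectory may also end on a turning arc, as the 1v1 optimum above does. Nor does transversality single out the 1v1 optimum. In the example, the opposite-turn path of Remark~\ref{rem:suboptimal} for pursuer~2 ($q=1$: turn toward $+x$ until $\xi_{p_20}^P$ is directly behind, then go straight) exits through $\mathcal P_2^P(\rho)$ at about $(0.13,-1.11)$ after about $0.46/v_e$. The true optimum therefore beats the Dubins time to either corner. It ends in $\mathcal P_1^P(\rho)\cup\mathcal P_2^P(\rho)$ and differs from both 1v1 optima, so the 1v1 optimum is not the unique interior candidate.

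\paragraph{What a correct version needs.}
A true statement needs an extra hypothesis excluding such competitors. One option is to require both of the following.
\begin{itemize}
\item No single-disk escape extremal ending on $\mathcal P_i^P(\rho)$, including opposite-turn and pure-turn ones, is faster than $\min_jT^{H_j}$.
\item The paths $u_e^{H_j^E}$ do not meet $\overline{\mathcal P_1^P(\rho)\cup\mathcal P_2^P(\rho)}$ before $T^{H_j}$.
\end{itemize}
Otherwise $\min_jT^{H_j}$ is only an upper bound on the minimal time.
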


\begin{proof}
    The condition $H_{q_j}^E\notin \mathcal{B}_{r_e}(C_{q_j}^E)$ ensures that the components in \eqref{eq:u_eH}, \eqref{eq:opt_H_j_e} are well defined and $H_{q_j}^E$ can be reached in a turn-straight maneuver. This condition in combination with the assumption $\xi_{e}^P(T^i)\in \bar{\mathcal{P}}_i^P(\rho)$ for all $i\in \{1,2\}$,
    additionally implies that $T^{H_j}>0$.

    Since $\xi_{e}^P(T^i)\in \bar{\mathcal{P}}_i^P(\rho)$, $i\in \{1,2\}$, strategies from the 1v1 setting are not optimal. However, the closest point from $\xi_{e}^P(T^i)$, $i\in \{1,2\}$, to $\overline{\mathcal{P}_1^P(\rho) \cup \mathcal{P}_2^P(\rho)}$, to terminate the game is contained in $\{H_{q_{-1}}^E,H_{q_{1}}^E\}$. Thus, one of the strategies $u_e^{H_j^E}(t)$, $j\in \{-1,1\}$, is optimal.
\end{proof}

\begin{remark}
Note that we do not claim that $u_e^{H_k^E}(t)$ is optimal to terminate the game. Here, we might not reach $H_{q_{k}}$ with the correct angle pointing outside of the game set.
\end{remark}

In the case that $H_{q_j}^E \in \mathcal{B}_{r_e}(C_{q_j}^E)$ for $j\in \{-1,1\}$ an additional strategy might be necessary. In this case it is not possible to reach $H_{q_j}^E$ through a single turn-straight maneuver. The setting is visualized in Fig.~\ref{fig:2v1_no_dominant_noH}, where $H_{-1}^P$ cannot be reached through a simple left turn.
In this case we use the strategy \eqref{eq:u_e*}, but define $q$ differently. We conclude this section with the following conjecture.

\begin{figure}[htb]
    \centering
    \begin{overpic}[width = 1\columnwidth]{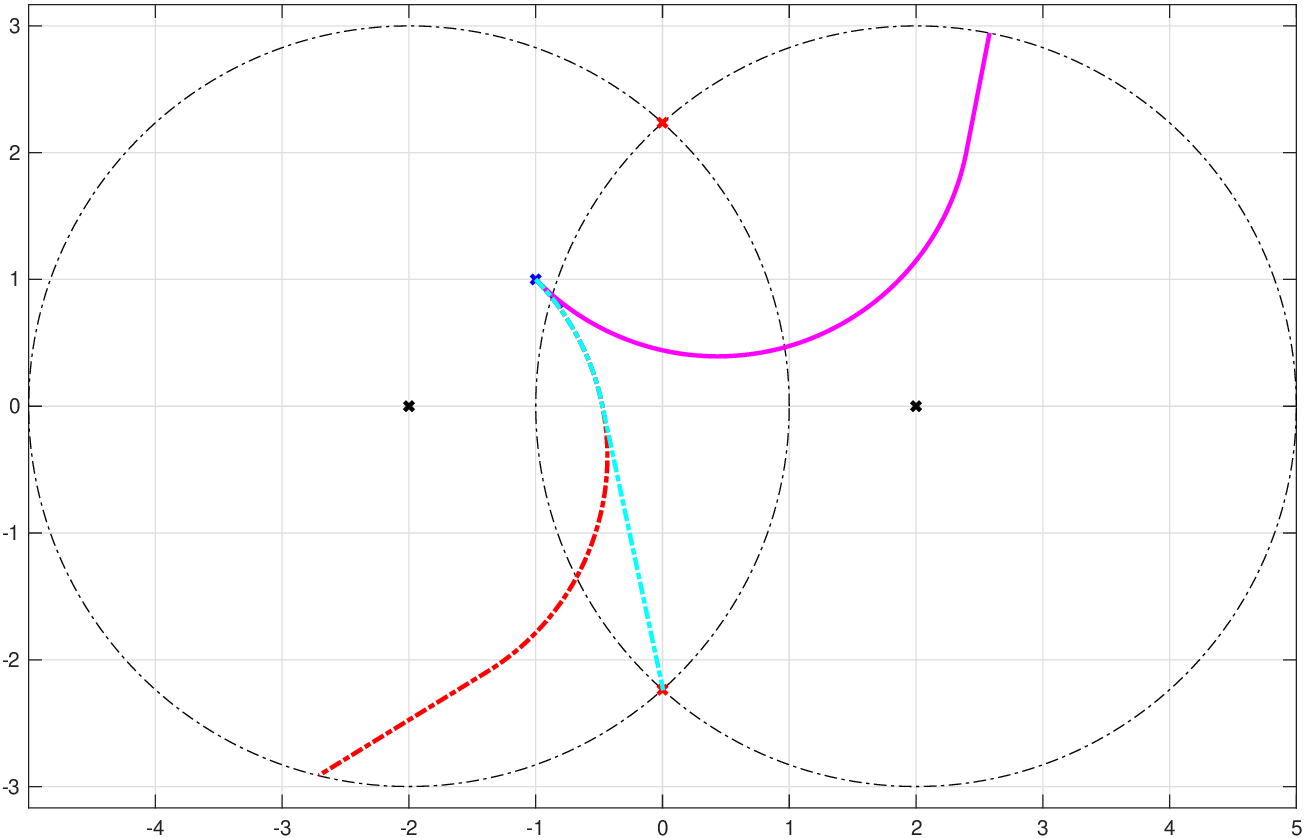}
    \put(36,45){\small{$\xi_{e\theta 0}^P$}}
    \put(47.5,6.5){\small{$H_1^P$}}
    \put(47.5,58){\small{$H_{-1}^P$}}
    \put(27,28){\small{$\xi_{p_10}^P$}}
    \put(68,28){\small{$\xi_{p_20}^P$}}
    \end{overpic}
    \caption{Illustration of the assumptions covered in Conjecture \ref{conj:strategy_same_sides}. It is not possible to reach $H_{-1}^P$ through a turn-straight maneuver. In this case we consider the trajectories to $H_1^P$ (cyan) and the trajectory in magenta (see Remark \ref{rem:suboptimal}) as candidates for the optimal solution.}
    \label{fig:2v1_no_dominant_noH}
\end{figure}

\begin{cconj} \label{conj:strategy_same_sides}
Consider Problem \ref{prob:game_of_degree} with $v_{p_1}=v_{p_2}=0$, let $\xi_{e0}^P\in \mathcal{B}_{\rho}(\xi_{p_10}^P)\cup \mathcal{B}_{\rho}(\xi_{p_20}^P)$, define 
$q_i\in -\sign(x_{p_i0}^E)$ for $i\in \{1,2\}$ and let 
$u_e^{*_{q_i}}(t)$ denote the 1v1 strategies defined in
\eqref{eq:u_e*} with corresponding $T^i = V(\xi_{e\theta 0}^P,\{\xi_{{p_i}0}^P\}_{i=1}^2)$.
Assume that $\xi_{e}^P(T^i)\in \bar{\mathcal{P}}_i^P(\rho)$ for all $i\in \{1,2\}$.
For $j\in \{-1,1\}$, define $q_j\in \sign(H_{j}^E(1))$. If $H_{q_j}^E \in \mathcal{B}_{r_e}(C_{q_j}^E)$ define
\begin{align}
    i=\left\{ \begin{array}{cl}
        1 & \text{if } q_j\cdot j= -1  \\
        2 & \text{if } q_j\cdot j= 1
    \end{array} \right.
\end{align}
and for $j\in \{-1,1\}$, if $H_{q_j}^E \in \mathcal{B}_{r_e}(C_{q_j}^E)$, define $u_e^j = u_e^{*_i}$ and $T^j$ through the condition $\xi_e^P(T^j,u_e^{*_i}) \in \mathcal{P}_i^P(\rho)$ where $u_e^{*_i}$ is defined in \eqref{eq:u_e*}.
If $H_{q_j}^E \notin \mathcal{B}_{r_e}(C_{q_j}^E)$ define
$u_e^j=u_e^{H_j^E}(t)$ (see 
\eqref{eq:u_eH}) and let  $T^j$ be defined through the condition $\xi_e^E(T^{j},u_e^{H_j^E})=H_j^E$.

If $k$ is defined as
$k=\argmin_{j\in \{-1,1\}} T^{j}$,
then $u_e^{k}(t)$ and $T^{k}$ are optimal for the 2v1 setting.
\end{cconj}

The setting covered in Conjecture \ref{conj:strategy_same_sides} is visualized in Fig.~\ref{fig:2v1_no_dominant_noH}. Focusing on $\xi_{p_20}^P$, the optimal strategy of the 1v1 setting is shown in red. However, since this strategy ends on $\bar{\mathcal{P}}_2^P(\rho)$ it is better to aim for $H_1^P$ leading to the trajectory in cyan. Since we deviate from the optimal 1v1 strategy, we also need to take the point $H_{-1}^P$ into account. However, due to the fact that $H_{-1}^P\notin \mathcal{B}_{r_e}(C_1^P)$ we cannot reach $H_{-1}^P$ in a turn-straight maneuver.
We thus use the strategy  $u_e^{*_{q_2}}(t)$,  $q_2\in \sign(x_{p_20}^E)$, in \eqref{eq:u_e*}, visualized in magenta as a candidate optimal solution.

\section{Non-static pursuers}
\label{sec:non-static-pursuers}

In this section we discuss extensions of Section \ref{sec:2v1} to the case of moving pursuers with $v_{p_1}=v_{p_2}\leq \frac{1}{2}v_e$.
Under this assumption and highlighted through Lemma \ref{lem:illustration_speed_ratio_assumption} we recall that the optimal pursuer strategies of the 1v1 setting are constant and the pursuer follows a straight line.
Thus, instead of focusing on particular pursuer strategies, we can focus on reachable sets for the pursuers and investigate the minimal time until the evader has a distance $\rho$ to the reachable sets. Moreover, due to the simple dynamics \eqref{eq:cartesianDynamicsPursuer}, the reachable set of a pursuer is defined through a ball.

\begin{corollary}
    For a set of pursuers $i\in \{1,2\}$ with dynamics \eqref{eq:cartesianDynamicsPursuer}, speed $v_{p_i}\in \R_{\geq 0}$ and initial positions $\xi_{p_i0}\in \R^2$, the points a pursuer can reach in time $t\in \R_{\geq 0}$ are characterized through $\mathcal{B}_{tv_{p_i}}(\xi_{p_i0})$, $i\in \{1,2\}$. 
\end{corollary}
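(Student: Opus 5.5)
The claim is that the time-$t$ reachable set of pursuer $i$ is exactly the closed ball $\mathcal{B}_{tv_{p_i}}(\xi_{p_i0})$. I would prove this by showing two inclusions, treating each pursuer separately, since the dynamics \eqref{eq:cartesianDynamicsPursuer} are decoupled. Throughout I use that admissible inputs are measurable heading signals $\theta_{p_i}(\cdot)$ with values in $(-\pi,\pi]$, and that the velocity $[v_{p_i}\sin\theta_{p_i},\, v_{p_i}\cos\theta_{p_i}]^\top$ always has Euclidean norm exactly $v_{p_i}$.

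\emph{Reachable points lie in the ball.} Take any admissible input and any $t\geq 0$. Writing
\[
\xi_{p_i}(t)-\xi_{p_i0}=\int_0^t \dot\xi_{p_i}(s)\,\mathrm{d}s
\]
and applying the triangle inequality for integrals gives
\[
|\xi_{p_i}(t)-\xi_{p_i0}|\leq \int_0^t |\dot\xi_{p_i}(s)|\,\mathrm{d}s = t v_{p_i}.
\]
Hence $\xi_{p_i}(t)\in\mathcal{B}_{tv_{p_i}}(\xi_{p_i0})$.

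\emph{Every point of the ball is reachable at time exactly $t$.} The case $v_{p_i}=0$ or $t=0$ is trivial, since the ball is then the single point $\xi_{p_i0}$. Otherwise fix $\zeta$ in the ball and set $d=|\zeta-\xi_{p_i0}|\leq tv_{p_i}$. The difficulty is that the speed is constant, so the pursuer cannot simply stop and must spend the surplus time $t-d/v_{p_i}$ somewhere. I would handle this with a two-leg construction. Choose a point $m$ with $|m-\xi_{p_i0}|=|m-\zeta|=tv_{p_i}/2$. Such an $m$ exists because the two circles of radius $tv_{p_i}/2$ around $\xi_{p_i0}$ and $\zeta$ intersect whenever $d\leq tv_{p_i}$; when $d=0$ any point at that distance works.

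On $[0,t/2]$ the pursuer uses the constant heading $\arctan_2(m-\xi_{p_i0})$, and on $(t/2,t]$ it uses the constant heading $\arctan_2(\zeta-m)$. This piecewise-constant input is admissible, and by the explicit straight-line solution form already used in \eqref{eq:solution_xi_p1v1} it gives $\xi_{p_i}(t)=\zeta$.

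The only delicate point is this constant-speed issue in the second inclusion; the chosen midpoint $m$ resolves it.
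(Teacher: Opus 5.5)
Your proof is correct. The paper gives no separate argument; it treats the corollary as immediate from the simple dynamics \eqref{eq:cartesianDynamicsPursuer}. Your two inclusions supply exactly the details that appeal leaves implicit: the integral triangle inequality gives containment, and the two-leg path through the equidistant point $m$ absorbs the surplus time under constant speed. Since $s v_{p_i}\leq t v_{p_i}$ for $s\leq t$, the same argument also covers the ``within time $t$'' reading of the statement.
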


In this case, we can conclude that the 1v1 game ends when $\xi_{e}(t) \notin \mathcal{B}_{t v_{p_i}+\rho}(\xi_{p_i0})$ is satisfied. Additionally, since the optimal strategy of the evader is characterized through the initial condition of the pursuer (but not the radius $\rho$), this extension does not change the strategy of the evader.

For the 2v1 setting, we have to extend the definition of the points $H_j^P$, $j\in \{-1,1\}$, introduced in Fig.~\ref{fig:2v1_setting}, which now depend on the initial position of the evader.
In particular, we denote the time the evader takes to reach the (unknown) point $H_j^P$ through $T^\# \in \R_{\geq 0}$. By using the input \eqref{eq:u_eH} and the solution \eqref{eq:opt_H_j_e}, $T^\#$ can be decomposed into 
\begin{align}
    T^{\#} = T_s +T_t,
\end{align}
where $T_t,T_s \in \R_{\geq 0}$ denote the time the evader follows a circle and goes straight, respectively.

With these definitions, considering the visualization in Fig.~\ref{fig:2v1_setting}, it holds that $H_j^P(1)=0$ and
\begin{align}
    \sqrt{H_j^P(2)^2 + (y_{p_1}^P)^2} = v_p T^\# + \rho. \label{eq:H_jP_condition}
\end{align}
Moreover, $v_p T^\#$ represents the distance covered by the pursuer and  $v_p T^\# +\rho$ implies that the pursuer is at a distance of $\rho$ to the evader when $\xi_{e}^P(T^\#)=H_j^P$. 
Here, $T_s$ and $T_t$ satisfy
\begin{align}
    T_t &= \tfrac{\bar{\beta}_q}{v_e}r_e, \qquad
    T_s = \tfrac{|\bar{S}_q^E -H_j^E|}{v_e},
\end{align}
which, used in \eqref{eq:H_jP_condition} together with the fact $|\bar{S}_q^P -H_j^P|=|\bar{S}_q^E -H_j^E|$, leads to $H_j^P(1) = 0$ and
\begin{align}
\begin{split}
    H_j^P(2)^2 + (y_{p_1}^P)^2 &= \left( \! \frac{v_p}{v_e} (\bar{\beta}_q r_e+|\bar{S}_q^P -H_j^P|) + \rho \! \right)^2 \! \! .
\end{split}\label{eq:H_jP_condition2}
\end{align}
Thus, for $j\in\{-1,1\}$, $H_j^P$ is defined through the solution of the equations \eqref{eq:H_jP_condition2},
\eqref{eq:alpha_bar}, 
\eqref{eq:def_S_bar},
\eqref{eq:beta_bar}
and the coordinate transformation \eqref{eq:co_trans_P_to_E}.
Note that, different to \eqref{eq:HP},
$H_{-1}^P(2)\neq - H_1^P(2)$, in general, when $v_{p}\neq 0$.
However, also note that the points $H_{j}^P$ do not depend on the pursuer selection in \eqref{eq:H_jP_condition} since $\xi_{p_1}^P(2)^2=\xi_{p_2}^P(2)^2$.

To take $H_{-1}^P(2)\neq - H_1^P(2)$ into account in the following result extending Lemma
\ref{lem:strategy_dominant}, we
need to introduce the cones 
\begin{align*}
\mathcal{C}_i^P = \{\xi_{p_i0}^P+a(H_{-1}^P-\xi_{p_i0}^P)+ b(H_{1}^P-\xi_{p_s0}^P) | \ a,b\in \R_{\geq 0} \} 
\end{align*}
for $i,s\in \{1,2\}$, $s\neq i$, extending the definitions \eqref{eq:terminatiing_sets}.
In particular, since $H_{-1}^P=-H_{1}^P$ in the case of static pursuers, it holds that $\bar{\mathcal{P}}_i^P(\rho) \subset \mathcal{C}_i^P$ for all $\rho>0$ and for all $i\in \{1,2\}$.

\begin{lemma} \label{lem:strategy_dominant_non_static}
    Consider Problem \ref{prob:game_of_degree} with $v_{p_1}=v_{p_2}\leq \frac{1}{2} v_e$.
    Let $\xi_{e0}^P\in \mathcal{B}_{\rho}(\xi_{p_10}^P)\cup \mathcal{B}_{\rho}(\xi_{p_20}^P)$,
    define 
$q_i\in -\sign(x_{p_i0}^E)$ for $i\in \{1,2\}$ and let 
$u_e^{*_{q_i}}(t)$ denote the optimal control strategies from the 1v1 setting  defined in
\eqref{eq:u_e*} with corresponding $T^i = V(\xi_{e\theta 0}^P,\{\xi_{{p_i}0}^P\}_{i=1}^2)$.
Let $H_j^P$, $j\in \{-1,1\}$, be uniquely defined through the solution of \eqref{eq:co_trans_P_to_E}, \eqref{eq:H_jP_condition2},
\eqref{eq:alpha_bar}--
\eqref{eq:beta_bar}.

If there exists $i\in \{1,2\}$ such that $\xi_{e}^P(T^i)\notin \mathcal{C}_i^P$ and $T^i>0$, then $u_e^{*_{q_i}}(t)$ is optimal for the 2v1 setting.
\end{lemma}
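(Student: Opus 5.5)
The plan mirrors the proof of Lemma \ref{lem:strategy_dominant}: a lower bound coming from the 1v1 game, plus an argument that this bound is attained in the 2v1 game. The new ingredient is the moving pursuers, which we handle through reachable sets (the Corollary above) rather than explicit pursuer strategies. Fix $i$ with $\xi_e^P(T^i)\notin\mathcal{C}_i^P$ and $T^i>0$, and let $s\neq i$.

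\emph{Lower bound.} Adding a second pursuer can only help the pursuers: whatever pursuer $i$ does in the 1v1 game it can also do in the 2v1 game. Hence the 2v1 value is at least $T^i$, because for any evader input, pursuer $i$ alone can keep $r_i(t)\le\rho$ up to the 1v1 value. Under the assumption $v_{p_i}\le \tfrac12 v_e$, Lemma \ref{lem:illustration_speed_ratio_assumption} and Lemma \ref{lem:optimal_inputs_solutions} give the 1v1 optimal pair, with pursuer $i$ moving on a straight line. The 1v1 game under $u_e^{*_{q_i}}$ ends at $T^i$, namely when $\xi_e^P(T^i)\in\partial\mathcal{B}_{T^iv_{p}+\rho}(\xi_{p_i0}^P)$.

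\emph{Upper bound: pursuer $s$ cannot surveil at $T^i$.} We must show that under $u_e^{*_{q_i}}$ the evader is also out of range of pursuer $s$ at time $T^i$, whatever pursuer $s$ does. By the Corollary, this amounts to
\[
|\xi_e^P(T^i)-\xi_{p_s0}^P| > T^i v_p+\rho .
\]
This is where the cones $\mathcal{C}_i^P$ enter. The points $H_{\pm1}^P$ were constructed via \eqref{eq:H_jP_condition2} to lie on the $y$-axis and to be equidistant, in the time-inflated sense, from both pursuers. That is, the evader reaching $H_j^P$ at time $T^\#$ satisfies $|H_j^P-\xi_{p_i0}^P|=|H_j^P-\xi_{p_s0}^P|=v_pT^\#+\rho$.

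The claim to establish is a separation statement. Suppose a point $\zeta$ satisfies $|\zeta-\xi_{p_i0}^P|=v_pt+\rho$, where $t$ is the time at which the evader reaches $\zeta$ along the 1v1 optimal path. If moreover $\zeta\notin\mathcal{C}_i^P$, then $\zeta$ lies on the far side of the perpendicular bisector $\{x=0\}$ relative to the region dominated by pursuer $s$, and therefore $|\zeta-\xi_{p_s0}^P|>v_pt+\rho$.

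I would prove this separation claim as follows. First observe that the sphere $\partial\mathcal{B}_{v_pt+\rho}(\xi_{p_i0}^P)$ meets the sphere around $\xi_{p_s0}^P$ of the same radius only on the $y$-axis. By the defining equation \eqref{eq:H_jP_condition2}, the arc of the first sphere lying inside the second is exactly the part cut out by the rays from $\xi_{p_i0}^P$ through $H_{\pm1}^P$. This arc is contained in $\mathcal{C}_i^P$, in the same way as $\bar{\mathcal{P}}_i^P(\rho)\subset\mathcal{C}_i^P$ in the static case. Taking complements then gives the strict inequality, so the 2v1 game ends at $T^i$. Together with the lower bound, $u_e^{*_{q_i}}$ is optimal, and $T^i>0$ excludes the degenerate case of starting already outside.

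\emph{Main obstacle.} The delicate step is the separation claim. The radius $v_pt+\rho$ depends on the arrival time along a particular evader path, while the points $H_j^P$ are defined using the turn-straight paths of Lemma \ref{lem:optimal_strategy_to_point}, not the 1v1 path. I would try to use monotonicity: the evader moves at unit-normalized speed $v_e> 2v_p$, so the map $t\mapsto |\xi_e(t)-\xi_{p_\ell0}^P|-v_pt$ is what matters for $\ell=i,s$. The intended argument is that the evader first hits the inflated ball boundary of pursuer $s$ only after crossing the $y$-axis outside the segment between $H_{-1}^P$ and $H_1^P$. This is the step I would treat most carefully, possibly needing the uniqueness hypothesis on $H_j^P$ in the statement.
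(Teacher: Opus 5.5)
Your overall structure matches the paper's proof: a 1v1 lower bound $T^i$, then termination at $T^i$. The paper's one-line proof simply asserts that $\xi_e^P(T^i)\notin\mathcal{C}_i^P$ ensures termination at $T^i$. The gap is in your justification of that termination step, which is where the content lies.

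Put $R:=v_pT^i+\rho$. You claim the arc $\{\zeta:|\zeta-\xi_{p_i0}^P|=R,\ |\zeta-\xi_{p_s0}^P|\le R\}$ is \emph{exactly} the part cut out by the rays from $\xi_{p_i0}^P$ through $H_{\pm1}^P$. This is false in general. That arc ends at $(0,\pm\sqrt{R^2-\chi^2})$. By \eqref{eq:H_jP_condition}, however, $|H_j^P-\xi_{p_i0}^P|=v_pT^{\#}_j+\rho$, where $T^{\#}_j$ is the arrival time on the turn--straight path to $H_j^P$. So the rays meet the radius-$R$ circle on the $y$-axis only when $T^i=T^{\#}_j$.

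Worse, if $T^i>T^{\#}_j$, the ray through $H_j^P$ meets that circle on pursuer $s$'s side of the bisector. Part of the bad arc then lies outside $\mathcal{C}_i^P$, and $\xi_e^P(T^i)\notin\mathcal{C}_i^P$ would not rule out surveillance by pursuer $s$. You notice this mismatch in your last paragraph but leave it unresolved. The monotonicity/first-hitting idea there also targets the wrong object: only the single position $\xi_e^P(T^i)$ matters, because the reachable-set bound already handles pursuer $s$'s arbitrary play.

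The missing step is the comparison $T^i\le T^{\#}_j$ for $j\in\{-1,1\}$, which follows from 1v1 optimality.
\begin{itemize}
\item Along the turn--straight path to $H_j^P$, continued straight beyond $H_j^P$ as in \eqref{eq:opt_H_j_e}, set $g(t)=|\xi_e^P(t)-\xi_{p_i0}^P|-v_pt-\rho$. Then $g(T^{\#}_j)=0$.
\item $g$ is positive at times arbitrarily close to $T^{\#}_j$: just before it if $\dot g(T^{\#}_j)<0$, and just after it otherwise. The latter holds because $g$ is convex and nowhere locally constant along a straight line.
\item At such times pursuer $i$ cannot be in range whatever it does. Hence $T^i\le T^{\#}_j$, i.e.\ $R\le|H_j^P-\xi_{p_i0}^P|$.
\end{itemize}

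Now take $i=1$, and read $\mathcal{C}_1^P$, as you do, as the sector at $\xi_{p_10}^P$ spanned by the rays through $H_{\pm1}^P$. The rays then meet the radius-$R$ circle at points with nonpositive first coordinate. So the whole arc of that circle with $\zeta(1)\ge 0$ lies in $\mathcal{C}_1^P$; inclusion, not equality, is what the argument needs.

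Consequently $\xi_e^P(T^1)\notin\mathcal{C}_1^P$ forces $\xi_e^P(T^1)(1)<0$. The identity $|\zeta-\xi_{p_20}^P|^2-|\zeta-\xi_{p_10}^P|^2=-4\chi\,\zeta(1)$ then gives $|\xi_e^P(T^1)-\xi_{p_20}^P|>R$. This is exactly the strict inequality your upper bound requires.
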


Before we give a proof, note that well-posedness of $H_j^P$ is similar to the condition $H_j^E\notin \mathcal{B}_{r_e}(C_q^E)$ in Lemma \ref{lem:optimal_strategy_to_point}, guaranteeing that the evader can reach the point through a turn-straight maneuver.

\begin{proof}
The proof follows the same arguments as the proof of  Lemma \ref{lem:strategy_dominant} with the only difference that the sets defined in \eqref{eq:terminatiing_sets} need to be replaced with $\mathcal{C}_i^P$. Here, $\xi_{e}^P(T^i)\notin \mathcal{C}_i^P$ ensures that the game terminates at time $T^i$, which is a lower bound for the 2v1 setting and is thus optimal.
\end{proof}

Similar to Lemma \ref{lem:strategy_dominant}, there is a direct extension of Lemma \ref{lem:strategy_same_sides} in the non-static pursuer case.

\begin{lemma} \label{lem:strategy_same_sides_non_static}
Consider Problem \ref{prob:game_of_degree} with $v_{p_1}=v_{p_2}\leq \frac{1}{2}v_e$.
    Let $\xi_{e0}^P\in \mathcal{B}_{\rho}(\xi_{p_10}^P)\cup \mathcal{B}_{\rho}(\xi_{p_20}^P)$, define 
$q_i\in -\sign(x_{p_i0}^E)$ for $i\in \{1,2\}$ and let 
$u_e^{*_{q_i}}(t)$ denote the optimal control strategies from the 1v1 setting  defined in
\eqref{eq:u_e*} with corresponding $T^i = V(\xi_{e\theta 0}^P,\{\xi_{{p_i}0}^P\}_{i=1}^2)$.
Let $H_j^P$, $j\in \{-1,1\}$, be uniquely defined through the solution of \eqref{eq:co_trans_P_to_E}, \eqref{eq:H_jP_condition2},
\eqref{eq:alpha_bar}--
\eqref{eq:beta_bar}.
Assume that $\xi_{e}^P(T^i)\notin \mathcal{C}_i^P$ for all $i\in \{1,2\}$.
For $j\in \{-1,1\}$, define $q_j\in \sign(H_{j}^E(1))$. Assume that  $H_{q_j}^E\notin \mathcal{B}_{r_e}(C_{q_j}^E)$ for all $j\in \{-1,1\}$.
For $j\in \{-1,1\}$ consider 
$u_e^{H_j^E}(t)$ and $\xi_e(t,u_e^{H_j^E})$ in 
\eqref{eq:u_eH}, \eqref{eq:opt_H_j_e}, respectively.  Let $\xi_e(T^{H_j},u_e^{H_j^E})=H_j^E$ and $k=\argmin_{j\in \{-1,1\}} T^{H_j}.$
Then $T^{H_k}$ is the smallest time such that $|\xi_{e0}-\xi_{p_i0}|\geq v_p T^{H_k} + \rho$ is satisfied.
\end{lemma}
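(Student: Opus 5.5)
The plan mirrors the proof of Lemma \ref{lem:strategy_same_sides}, with the static terminal sets replaced by the growing balls given by the reachable-set corollary. First I would read the statement's conclusion as: the game ends at the first time $t$ with $\xi_e(t)\notin \mathcal{B}_{tv_p+\rho}(\xi_{p_10})\cup\mathcal{B}_{tv_p+\rho}(\xi_{p_20})$, i.e. $|\xi_e(t)-\xi_{p_i0}|\geq v_p t+\rho$ for both $i$. I also use the remark after Lemma \ref{lem:illustration_speed_ratio_assumption} that, for $v_p\le \tfrac12 v_e$, the pursuers' optimal strategies are straight lines. So the pursuers can be handled through their reachable balls, and the problem becomes a minimum-time problem for the evader alone, with time-varying terminal set $\mathcal{T}(t)=\{\xi: |\xi-\xi_{p_i0}^P|\geq v_pt+\rho,\ i=1,2\}$.

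Step 1 (well-definedness and $T^{H_j}>0$). By hypothesis $H_{q_j}^E\notin\mathcal{B}_{r_e}(C_{q_j}^E)$, so Lemma \ref{lem:optimal_strategy_to_point} applies and $u_e^{H_j^E}$ is the shortest (turn-straight) path to $H_j^E$, of duration $T^{H_j}=T_t+T_s$. By construction \eqref{eq:H_jP_condition2}, $H_j^P$ lies on the $y$-axis at distance exactly $v_pT^{H_j}+\rho$ from both pursuer starting points, since $(y_{p_1}^P)^2=(y_{p_2}^P)^2$. Hence at time $T^{H_j}$ the evader is on the boundary of $\mathcal{T}(T^{H_j})$, and $T^{H_k}$ is an admissible termination time. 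This gives the upper bound.

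Step 2 (lower bound). Let $u_e$ be any evader input with first exit time $T$ from both reachable balls, and suppose $T<T^{H_k}$. Because $\xi_e^P(T^i)\notin\mathcal{C}_i^P$ fails in the relevant sense (the hypothesis is meant as the analogue of $\xi_e^P(T^i)\in\bar{\mathcal{P}}_i^P(\rho)$), the 1v1 optimal exits from either ball alone land where the other pursuer's ball still covers the evader. I would then argue that the exit point $\xi_e(T)$ lies on the intersection of the two circles of radius $v_pT+\rho$, i.e. at the analogue of $H_j$ for time $T$. The reason is that the region outside the cone $\mathcal{C}_i^P$ is reachable only after $T^i$ and is dominated by the other ball inside the cone. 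Since that point is at least as far from the evader as it is reachable, the time consistency of \eqref{eq:H_jP_condition2} forces it to be one of the fixed points $H_{\pm1}^P$. Minimality of Dubins turn-straight paths (Lemma \ref{lem:optimal_strategy_to_point}) then gives $T\ge T^{H_j}\ge T^{H_k}$, a contradiction.

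The main obstacle is the claim in Step 2 that the first feasible exit point must be the moving intersection point. This needs a monotonicity argument: the function $t\mapsto$ (time to reach the intersection point at level $t$) minus $t$ has a unique zero, which is where the uniqueness assumption on $H_j^P$ enters. It also needs care because the cones $\mathcal{C}_i^P$ only approximate the moving regions $\bar{\mathcal P}_i$. I would try to show that on the $y$-axis the map $y\mapsto$ Dubins time to $(0,y)$ grows at rate at most $1/v_e$ times arc length, while the intersection point moves at rate at least $v_p$. Combined with $v_p\le v_e/2$, this monotonicity yields the result.
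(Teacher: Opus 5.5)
\paragraph{Where the proposal falls short.} Your architecture is the paper's. The paper proves this lemma by rerunning the proof of Lemma~\ref{lem:strategy_same_sides} with $\bar{\mathcal P}_i^P(\rho)$ replaced by $\mathcal C_i^P$. The upper bound is admissibility of $T^{H_k}$ via \eqref{eq:H_jP_condition2}, and the lower bound is Dubins optimality of the turn-straight path from Lemma~\ref{lem:optimal_strategy_to_point}. You also rightly read the hypothesis as $\xi_e^P(T^i)\in\mathcal C_i^P$ and the conclusion as a statement about $\xi_e(T^{H_k})$ rather than $\xi_{e0}$.

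The gap is the step you flag in Step~2: that any exit before $T^{H_k}$ would have to occur at the intersection point of the two growing circles. Your reasons do not give this. Knowing that the 1v1 exits land in the covered region, and that leaving ball $i$ takes at least $T^i$, only yields $T\ge\max\{T^1,T^2\}$. That bound is weaker, because $T^i\le T^{H_j}$ always: reaching $H_j^P$ at time $T^{H_j}$ is already a 1v1 escape from pursuer $i$. Points on the uncovered arcs $\{|\xi-\xi_{p_i0}^P|=v_pt+\rho,\ |\xi-\xi_{p_s0}^P|>v_pt+\rho\}$ off the $y$-axis are genuine terminal points. Nothing in your argument prevents the evader from reaching one of them before $T^{H_k}$, e.g.\ via a local but non-global 1v1 minimizer such as the opposite-turn candidate used in Conjecture~\ref{conj:strategy_same_sides}.

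What is missing is a statement about the evader's minimum reach time along these arcs: given that the unconstrained (1v1) minimizer is covered, the minimum over the uncovered part is attained at its endpoints $H_{\pm1}^P$. The paper supplies this step only as the assertion, inherited from the proof of Lemma~\ref{lem:strategy_same_sides}, that the closest terminating point lies in $\{H_{q_{-1}}^E,H_{q_1}^E\}$. Your proposal presents it as derived, but the derivation does not go through.

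\paragraph{Why the monotonicity argument cannot repair it.} Let $h(t)=[0,\pm\sqrt{(v_pt+\rho)^2-\chi^2}]^\top$ be the intersection point and $\tau$ the evader's Dubins reach time.
\begin{itemize}
\item Your argument only concerns whether $t\mapsto\tau(h(t))-t$ has a unique zero. That covers on-axis exits and the uniqueness of $H_j^P$, which the statement already assumes, and says nothing about off-axis exits.
\item It runs the wrong way. From $|\tfrac{d}{dt}\tau(h(t))|\le|\dot h(t)|/v_e$, strict decrease needs the upper bound $|\dot h(t)|<v_e$. Instead, $|\dot h(t)|=v_p(v_pt+\rho)/\sqrt{(v_pt+\rho)^2-\chi^2}\ge v_p$, and this blows up when the circles are nearly tangent, so $v_p\le v_e/2$ does not provide the bound.
\item The $1/v_e$ rate bound is not global either. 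The free-heading Dubins reach time is discontinuous across parts of the turning circles, and nothing prevents the relevant stretch of the $y$-axis from crossing one.
\end{itemize}
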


\begin{proof}
Follows the same arguments as the proof of Lemma \ref{lem:strategy_same_sides}, but with arguments with respect to \eqref{eq:terminatiing_sets} replaced with arguments with respect to $\mathcal{C}_i^P$.
\end{proof}

\section{Conclusion, Limitations, and Future Work} \label{sec:conclusions}

We have discussed first results extending the 1v1 prying pedestrian differential game of \cite{braun_prying_2025} to a 2v1 setting. 
Some aspects of the extension seem straightforward, however, since the 2v1 setting leads to (at least) a four dimensional state space, a direct extension of the approaches in \cite{braun_prying_2025} does not seem to be tractable. 
We have thus used a different approach to leverage the results in \cite{braun_prying_2025} to derive results for the 2v1 game under the assumption that $v_{p_1}=v_{p_2}=0$ and $v_{p_1}=v_{p_2}\leq \frac{1}{2} v_{e}$, respectively. 
Thus, this paper is a stepping stone towards a complete solution of the 2v1 game. 
A limitation of the discussion and the formulation of Problem \ref{prob:game_of_degree}, which we have not touched in this paper is that the evader might be forced to reenter the game set as visualized in Fig.~\ref{fig:limitations} if the maximal turning rate $\omega_e$ is not sufficiently large.
\begin{figure}[htb]
    \centering
    \begin{overpic}[width = 1\columnwidth]{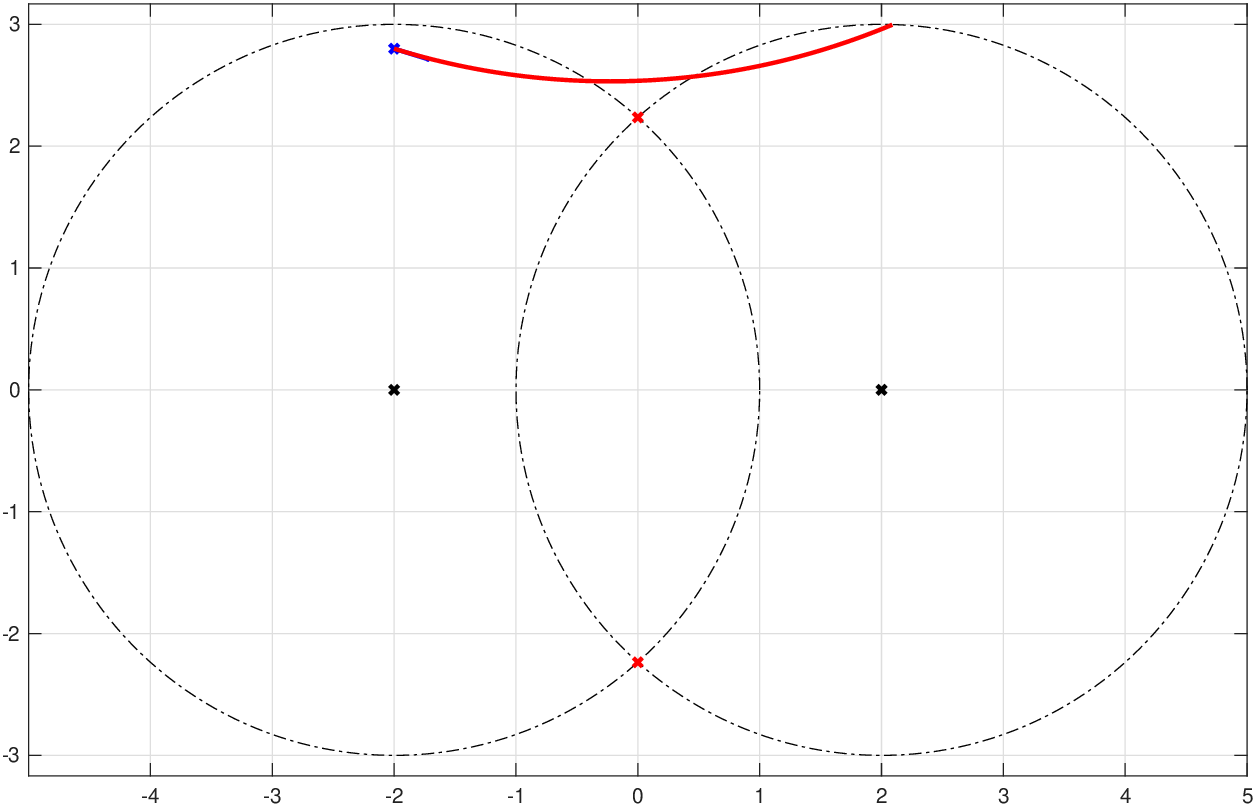}
    \put(26,55){\small{$\xi_{e\theta 0}^P$}}
    \put(47.5,6.5){\small{$H_1^P$}}
    \put(47.5,48){\small{$H_{-1}^P$}}
    \put(27,28){\small{$\xi_{p_10}^P$}}
    \put(68,28){\small{$\xi_{p_20}^P$}}
    \end{overpic}
    \caption{Illustration of limitations of Problem \ref{prob:game_of_degree}. According to Problem \ref{prob:game_of_degree}, the game ends when the set $\mathcal{B}_{\rho}(\xi_{p_10}^P)$ is left. However, depending on the turning rate $\omega_e$, the evader might be forced to reenter the surveillance radius of another pursuer.}.
    \label{fig:limitations}
\end{figure}
Thus, the current formulation of the 2v1 prying pedestrian game might not be complete. These issues, in  combination with a full solution of Problem \ref{prob:game_of_degree}, will be the focus of future work.

\bibliographystyle{IEEEtran}  
\bibliography{IEEEabrv,references.bib}

% Generated by IEEEtran.bst, version: 1.14 (2015/08/26)
\begin{thebibliography}{10}
\providecommand{\url}[1]{#1}
\csname url@samestyle\endcsname
\providecommand{\newblock}{\relax}
\providecommand{\bibinfo}[2]{#2}
\providecommand{\BIBentrySTDinterwordspacing}{\spaceskip=0pt\relax}
\providecommand{\BIBentryALTinterwordstretchfactor}{4}
\providecommand{\BIBentryALTinterwordspacing}{\spaceskip=\fontdimen2\font plus
\BIBentryALTinterwordstretchfactor\fontdimen3\font minus
  \fontdimen4\font\relax}
\providecommand{\BIBforeignlanguage}[2]{{%
\expandafter\ifx\csname l@#1\endcsname\relax
\typeout{** WARNING: IEEEtran.bst: No hyphenation pattern has been}%
\typeout{** loaded for the language `#1'. Using the pattern for}%
\typeout{** the default language instead.}%
\else
\language=\csname l@#1\endcsname
\fi
#2}}
\providecommand{\BIBdecl}{\relax}
\BIBdecl

\bibitem{braun_prying_2025}
P.~Braun, T.~L. Molloy, and I.~Shames, ``Prying {Pedestrian}
  {Surveillance}-{Evasion}: {Minimum}-{Time} {Evasion} from an {Agile}
  {Pursuer},'' \emph{Journal of Guidance, Control, and Dynamics}, vol.~48,
  no.~9, pp. 2090--2104, Sep. 2025.

\bibitem{merz_homicidal_1974}
A.~W. Merz, ``The {Homicidal} {Chauffeur},'' \emph{AIAA Journal}, vol.~12,
  no.~3, pp. 259--260, Mar. 1974.

\bibitem{exarchos_asymmetric_2014}
I.~Exarchos and P.~Tsiotras, ``An asymmetric version of the two car
  pursuit-evasion game,'' in \emph{53rd {IEEE} {Conference} on {Decision} and
  {Control}}, Dec. 2014, pp. 4272--4277.

\bibitem{exarchos_suicidal_2015}
I.~Exarchos, P.~Tsiotras, and M.~Pachter, ``On the suicidal pedestrian
  differential game,'' \emph{Dynamic Games and Applications}, vol.~5, no.~3,
  pp. 297--317, 2015.

\bibitem{weintraub_introduction_2020}
I.~E. Weintraub, M.~Pachter, and E.~Garcia, ``An introduction to
  pursuit-evasion differential games,'' in \emph{2020 {American} {Control}
  {Conference} ({ACC})}.\hskip 1em plus 0.5em minus 0.4em\relax IEEE, 2020, pp.
  1049--1066.

\bibitem{dorothy_one_2024}
M.~Dorothy, D.~Maity, D.~Shishika, and A.~Von~Moll, ``One {Apollonius} {Circle}
  is enough for many pursuit-evasion games,'' \emph{Automatica}, vol. 163, p.
  111587, May 2024.

\bibitem{lewin_surveillance-evasion_1975}
J.~Lewin and J.~Breakwell, ``The surveillance-evasion game of degree,''
  \emph{Journal of Optimization Theory and Applications}, vol.~16, no.~3, pp.
  339--353, 1975.

\bibitem{lewin_conic_1979}
J.~Lewin and G.~Olsder, ``Conic surveillance evasion,'' \emph{Journal of
  Optimization Theory and Applications}, vol.~27, pp. 107--125, 1979.

\bibitem{lewin_isotropic_1989}
J.~Lewin and G.~J. Olsder, ``The isotropic rocket – {A} surveillance evasion
  game,'' \emph{Computers \& Mathematics with Applications}, vol.~18, no.~1,
  pp. 15--34, 1989.

\bibitem{merz_optimal_1973}
A.~Merz, ``Optimal evasive maneuvers in maritime collision avoidance,''
  \emph{Navigation}, vol.~20, no.~2, pp. 144--152, 1973.

\bibitem{molloy_optimal_2020}
T.~L. Molloy, T.~Perez, and B.~P. Williams, ``Optimal bearing-only-information
  strategy for unmanned aircraft collision avoidance,'' \emph{Journal of
  Guidance, Control, and Dynamics}, vol.~43, no.~10, pp. 1822--1836, 2020.

\bibitem{isaacs_differential_1965}
R.~Isaacs, \emph{Differential {Games}: {Mathematical} {Theory} with
  {Application} to {Warfare} and {Pursuit} {Control} and {Optimisation}}.\hskip
  1em plus 0.5em minus 0.4em\relax New York: Dover Publications, 1965.

\bibitem{ibragimov_simple_2018}
G.~Ibragimov, M.~Ferrara, A.~Kuchkarov, and B.~A. Pansera, ``Simple {Motion}
  {Evasion} {Differential} {Game} of {Many} {Pursuers} and {Evaders} with
  {Integral} {Constraints},'' \emph{Dynamic Games and Applications}, vol.~8,
  no.~2, pp. 352--378, Jun. 2018.

\bibitem{azamov_evasion_2025}
A.~Azamov, G.~Ibragimov, and A.~Kurbanov, ``Evasion {Differential} {Game} of
  {One} {Faster} {Evader} with {Bounded} {Maneuverability} from {Multiple}
  {Pursuers},'' \emph{Dynamic Games and Applications}, Oct. 2025.

\bibitem{garcia_multiple_2021}
E.~Garcia, D.~W. Casbeer, A.~Von~Moll, and M.~Pachter, ``Multiple {Pursuer}
  {Multiple} {Evader} {Differential} {Games},'' \emph{IEEE Transactions on
  Automatic Control}, vol.~66, no.~5, pp. 2345--2350, May 2021.

\bibitem{exarchos_uav_2016}
I.~Exarchos, P.~Tsiotras, and M.~Pachter, ``{UAV} collision avoidance based on
  the solution of the suicidal pedestrian differential game,'' in \emph{{AIAA}
  {Guidance}, {Navigation}, and {Control} {Conference}}, 2016, p. 2100.

\bibitem{hagedorn_differential_1976}
P.~Hagedorn and J.~V. Breakwell, ``A differential game with two pursuers and
  one evader,'' \emph{Journal of Optimization Theory and Applications},
  vol.~18, no.~1, pp. 15--29, Jan. 1976.

\bibitem{molloy_minimum-time_2023}
T.~L. Molloy and I.~Shames, ``Minimum-{Time} {Escape} from a {Circular}
  {Region} for a {Dubins} {Car},'' \emph{IFAC-PapersOnLine}, vol.~56, no.~1,
  pp. 43--48, 2023.

\bibitem{weintraub_minimum_2024}
I.~E. Weintraub, A.~Von~Moll, and M.~N. Pachter, ``Minimum {Time} {Escape} from
  a {Circular} {Region} of a {Dubins} {Car},'' in \emph{{NAECON} 2024 - {IEEE}
  {National} {Aerospace} and {Electronics} {Conference}}, 2024, pp. 34--37.

\bibitem{braun_capture_2023}
P.~Braun, I.~Shames, D.~Hubczenko, A.~Dostovalova, and B.~Fraser, ``Capture the
  flag games: {Observations} from the 2022 {Aquaticus} competition,''
  \emph{IFAC-PapersOnLine}, vol.~56, no.~2, pp. 11\,363--11\,368, 2023.

\bibitem{dubins_curves_1957}
L.~E. Dubins, ``On curves of minimal length with a constraint on average
  curvature, and with prescribed initial and terminal positions and tangents,''
  \emph{American Journal of mathematics}, vol.~79, no.~3, pp. 497--516, 1957.

\end{thebibliography}

\end{document}